\newif\ifcccsub
\theoremstyle{plain}
\newtheorem{theorem}{Theorem}[section]
\newtheorem{lemma}[theorem]{Lemma}
\newtheorem{proposition}[theorem]{Proposition}
\newtheorem{corollary}[theorem]{Corollary}
\newtheorem{definition}[theorem]{Definition}
\newtheorem{asn}{Assumption}
\newtheorem{remark}{Remark}
\newcommand*{\bra}[1]{\langle#1|}
\newcommand*{\ket}[1]{|#1\rangle}
\DeclareMathOperator{\Tr}{Tr}
\newcommand{\generator}{G}
\newcommand{\reduction}{R}
\newcommand{\cnot}{C}
\newcommand{\qx}{\hat{Q}}
\newcommand{\lang}[1]{\textsc{#1}}
\newcommand{\class}[1]{\textsf{#1}}
\newcommand{\np}{\class{NP}}
\newcommand{\bool}{\left\{0,1\right\}}
\newcommand{\red}[1]{\leq_{#1}}
\newcommand{\mypar}[1]{\vspace{.5ex}\noindent\textbf{#1.}\vspace{.5ex}}
\newcommand{\Gen}{\textsf{Gen}}
\newcommand{\owp}{\textsf{OWP}}
\newcommand{\iowp}{\textsf{Inv-OWP}}
\DeclareMathOperator{\tr}{Tr} %
\DeclareMathOperator{\poly}{poly} %
\DeclareMathOperator{\negl}{negl} %
\newcommand{\secpar}{n} %
\newcommand{\kera}[1]{\ensuremath{\ket{#1}\bra{#1}}}
\newcommand{\abs}[1]{\ensuremath{\left\lvert #1 \right\rvert}} %
\newcommand{\hilbert}{\ensuremath{\mathcal{H}}}
\newcommand{\projr}{\ensuremath{\Pi_R}}
\newcommand{\succp}{\ensuremath{p_{\text{succ}}}}
\newcommand{\accs}{\ensuremath{S_{\text{acc}}}}
\newcommand{\malp}{\ensuremath{\tilde U}}
\newcommand{\veps}{\ensuremath{\varepsilon}}
\title{On Basing One-way Permutations on NP-hard Problems
  under Quantum Reductions} %
\author[1]{Nai-Hui Chia}
\author[2]{Sean Hallgren}\thanks{Partially supported by National Science
    Foundation awards CNS-1617802 and CCF-1618287, and by the National Security
    Agency (NSA) under Army Research Office (ARO) contract number
    W911NF-12-1-0541.}
\author[3]{Fang Song}
\affil[1]{Department of Computer Science, University of Texas at Austin, Austin, TX, 78712, USA.}
\affil[2]{Department of Computer Science and
    Engineering, The Pennsylvania State University,
University Park, PA, 16802, USA.}
\affil[3]{Department of Computer Science, Portland State University, Portland, OR 97201, USA.}
\date{}
\begin{document}
\maketitle
\begin{abstract}
A fundamental pursuit in complexity theory concerns reducing worst-case problems to average-case problems.  There exist complexity classes such as PSPACE that admit worst-case to average-case reductions.  However, for many other classes such as NP, the evidence so far is typically negative, in the sense that the existence of such reductions would cause collapses of the polynomial hierarchy(PH). Basing cryptographic primitives, e.g., the average-case hardness of inverting one-way permutations, on NP-completeness is a particularly intriguing instance. As there is evidence showing that classical reductions from NP-hard problems to breaking these primitives result in PH collapses, it seems unlikely to base cryptographic primitives on NP-hard problems. Nevertheless, these results do not rule out the possibilities of the existence of quantum reductions. In this work, we initiate a study of the quantum analogues of these questions.  Aside from formalizing basic notions of quantum reductions and demonstrating powers of quantum reductions by examples of separations, our main result shows that if NP-complete problems reduce to inverting one-way permutations using certain types of quantum reductions, then  \class{coNP} $\subseteq$ \class{QIP}($2$).
\end{abstract}

\section{Introduction}

A fundamental question in complexity theory is whether or not
worst-case problems have reductions to average-case problems.  A special case,
random-self-reductions, concerns reducing worst-case problems to
average-case problems of the same type.  Random-self-reductions exist
for complete sets of some classes such as \class{PSPACE}, \class{EXP}
and \class{\#P}.  Such reductions are not known to exist for
\class{NP}-complete problems.  Fortnow and Feigenbaum~\cite{FF93}
showed that sets which are complete for any level of the polynomial
hierarchy are not (non-adaptively) random-self-reducible unless the
polynomial hierarchy collapses, giving negative evidence for this
possibility.

More broadly, one can ask when the worst-case instances of one problem
can be reduced to random instances of a different problem, including
basing cryptographic primitives on \class{NP}-complete problems.
Namely, can we reduce \class{NP}-complete problems to breaking the
security of cryptosystems? It has been shown that approximating
several worst-case lattice problems reduce to average-case lattice
problems~\cite{Ajtai96,Mic04,Mic07}. These results ignited the booming of
lattice cryptography, and breaking these cryptosystems is as hard as
solving some worst-case lattice
problems~\cite{Regev04_2,Regev05,Ajtai97}. However, the worst-case lattice problems involved are in
\class{NP}$\cap$ \class{coNP}, which are believed not
\class{NP}-complete.

The pursuit of basing cryptographic primitives on NP-hardness has
largely ended up negative.  For instance, if one can reduce
\class{NP}-complete problems to inverting one-way
permutations~\cite{Brassard79}, size-verifiable one-way
functions~\cite{Akavia2006,Bogdanov2015}, single-server single-round
private information retrieval~\cite{LV15}, or some weak fully
homomorphic encryption scheme~\cite{BL13}, then the polynomial
hierarchy collapses. In this paper we bring this question in the
quantum computing paradigm: can cryptographic primitives be based on
\class{NP}-complete or \class{QMA}-complete problems if we allow
\emph{quantum} reductions? Namely, can we establish the security of
cryptographic primitives by showing a \emph{quantum} algorithm for an
\class{NP}-hard (or \class{QMA}-hard) problem, whenever there is an
attacker breaking the cryptographic primitive?

Is this hopeful at all? First note that previous negative results
in~\cite{BT06,Akavia2006,Bogdanov2015,FF93} would fail when quantum
reductions are allowed. Specifically, the hypothetical worst-case to
average-case reduction is used to construct a multi-round interactive
protocol (\class{IP(k)} with $k\geq 3$) for the complement of an \class{NP}-complete
language. Then we are able to claim containment of \class{coNP} in \class{AM} (and hence and
collapse of the polynomial hierarchy), because of a nice result in
complexity theory that \class{IP(k)} $=$ \class{AM}~\cite{GS86}. A direct translation of the classical argument in a quantum reduction\footnote{In some cases, such translation does not exist due to some properties of quantum information}, nonetheless, would result
in a \emph{quantum interactive proof} protocol with multiple
rounds. We would conclude that \class{coNP} $\subseteq$ \class{QIP}(k). But this is
trivially true, since \class{QIP}(k)=PSPACE for any $k\geq 3$. 

Concrete examples also exist where quantum reductions have proven
more powerful than classical counterparts in cryptography. Regev~\cite{Regev04} showed that the unique
shortest vector problem reduces to random subset sum problems via a
quantum reduction.  In addition, GapSVP and SIVP reduce to the
learning with errors (LWE) problem via quantum
reductions~\cite{Regev05}. There are no known classical reductions
between these problems under the same parameters. In fact, the only
worst-case to average-case reduction for the ring variant of LWE
(ring-LWE), which is a major competitor in the NIST standardization
effort of post-quantum cryptography, still relies on a quantum
reduction~\cite{LPR13}. Therefore, these are examples where quantum reductions
appear to be more powerful than classical reductions, when reducing
from worst-case to average-case problems.  Kawachi and
Yamakami~\cite{KY10} proved several hard-core predicates using quantum
reductions, inspired by earlier work on the quantum Goldreich-Levin
theorem~\cite{AC02} and the quantum algorithm for the Legendre
symbol~\cite{vDHI06}.

As existing negative results fail in the quantum setting and quantum reductions are shown to be more powerful than classical ones, it seems to be a live possibility that we can base cryptographic primitives on NP-hard or even QMA-hard problems via quantum reductions! 

\paragraph{Our results.} We give a general investigation of using
quantum reductions in basing the basic primitive one-way permutation
(function) on \np-hardness. Our contributions are summarized
below. 

We first generalize two central classical notions, \emph{locally
  random reduction} due to Feigenbaum and Lance Fortnow~\cite{FF93}
and \emph{worst-case to average-case reduction} due to Bogdanov and
Trevisan~\cite{BT06}, into the quantum setting.  {Locally random
  reductions} are commonplace in reductions between lattice problems
in~\cite{Ajtai96,Mic04,Mic07}, and Regev's reduction for
LWE~\cite{Regev05} naturally falls in our quantum analogue of locally
random reductions. Bogdanov and Trevision's notion is more general,
and our quantum formalization also allows to reason about more
powerful quantum reductions. We then give some characterizations of
the quantum reductions we define. We show that certain restrictions on
the quantum reductions will be too week to be interesting. For
instance, if a quantum reduction only issues \emph{entangled} queries
reminiscent of Bell states, it will not be able to base one-way
permutation on a language beyond \class{BQP}. On the other hand, we
give a couple of examples in the oracle setting that quantum
reductions are provably more powerful than their classical
counterparts.

Our main result is showing that the existence of some quantum
reductions implies unknown consequences in complexity.

\begin{theorem}\label{thm:intro}
  The existence of locally quantum reductions where the queries are
  non-adaptive and are according to smooth-computable distributions
  from an \class{NP}-complete problem (or \class{QMA-complete}
  problem) to the task of inverting one-way permutations implies
   \class{coNP} (or \class{coQMA}) $\subseteq$ \class{QIP($2$)}.
\end{theorem}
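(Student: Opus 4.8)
The plan is to follow the blueprint of the classical result that basing one-way permutations on \class{NP}-hardness collapses the polynomial hierarchy, with \class{QIP($2$)} playing the role that \class{AM} plays in that argument. Since the source problem $L$ is \class{NP}-complete (resp.\ \class{QMA}-complete), its complement $\bar L$ is \class{coNP}-complete (resp.\ \class{coQMA}-complete), so it suffices to give a two-message quantum interactive proof for $\bar L$. In that protocol the verifier will simulate the hypothesized reduction $(G,R)$ on input $x$, with the prover forced to play the role of the oracle that inverts the permutation $f$.

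Concretely, I would have the verifier run $G$ on $x$ — efficient because the query distribution is smooth-computable — to prepare the (possibly entangled) query register, keeping any purifying register $G$ produces. Non-adaptivity is what lets the protocol fit into two messages: all of $R$'s oracle calls form a single parallel batch, so the verifier sends the entire query register to the prover in one message and asks the prover to apply the inversion unitary $\ket q\ket 0\mapsto\ket q\ket{f^{-1}(q)}$ coherently and return it. On receiving the response, the verifier — who can compute $f$ in the forward direction — performs a \emph{coherent validity check}: it recomputes $f$ on the answer register, compares it with the query register and with the retained purifying register, and makes its final verdict depend on this check passing (this uses that a permutation has unique, efficiently verifiable preimages). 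It then runs the remainder of $R$ and accepts iff the check passed and $R$ reports ``$x\notin L$''.

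Completeness is easy: the honest prover applies $O_{f^{-1}}$, the check passes with certainty, and $R$ outputs $L(x)=0$ with the reduction's success probability, which we first boost by standard amplification. Soundness is the crux and, I expect, the main obstacle. Unlike the classical case, where the oracle's answers on the non-adaptive queries are uniquely determined and a prover can merely be made to ``guess and verify'' them, the coherent validity check here only confines the prover's returned state to the subspace in which the answer register holds a valid preimage of the query register; it does not pin down the amplitudes, so the prover keeps validity-preserving degrees of freedom — phase distortions on, or relabelings of, the query superposition — that could a priori corrupt $R$'s output. Handling this is exactly where the definitions of a locally quantum reduction and of a smooth-computable distribution must be used: smoothness keeps the honest query state spread out with no heavy points, which both lets the verifier fold randomization into its single message so as to tie the prover's effective oracle to the true $O_{f^{-1}}$ on average over the query distribution, and makes the reduction's guarantee robust to the residual distortions the prover can still apply. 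Granting this, one obtains a constant completeness--soundness gap; amplification for \class{QIP($2$)} then yields $\bar L\in\class{QIP($2$)}$, i.e.\ \class{coNP}$\,\subseteq\,$\class{QIP($2$)} (and identically \class{coQMA}$\,\subseteq\,$\class{QIP($2$)} from the \class{QMA}-complete case).
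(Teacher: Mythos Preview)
Your proposal correctly identifies the high-level strategy (build a \class{QIP($2$)} protocol for $\bar L$ by having the prover play the inverting oracle) and, crucially, correctly identifies the obstacle: a ``coherent validity check'' that merely verifies $f(\text{answer})=\text{query}$ in the computational basis only confines the prover to a large subspace; it does not fix amplitudes or phases. But you do not actually overcome this obstacle --- you defer it with the sentence ``Handling this is exactly where the definitions of a locally quantum reduction and of a smooth-computable distribution must be used'' and then write ``Granting this, \ldots''. Nothing in the definition of a locally quantum reduction guarantees that $R$ is robust to validity-preserving phase or permutation attacks on the answered state, and smoothness of the query distribution does not supply that robustness either. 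Concretely, a prover can apply $O_{f^{-1}}$ honestly and then apply $\ket{q,f^{-1}(q)}\mapsto e^{i\varphi(q)}\ket{q,f^{-1}(q)}$; this passes any computational-basis check you describe, yet can rotate the answered state arbitrarily far from $\ket{Q^H}$ inside the acceptance subspace of the check, so $R$'s output is uncontrolled. Your soundness claim therefore has a genuine gap.

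The paper closes this gap with a mechanism absent from your sketch: a \emph{trap state}. The verifier prepares $\frac{1}{\sqrt 2}(\ket{Q}\ket 0_\Pi+\ket{T}\ket 1_\Pi)$, where $\ket T$ is a second purification of $\Tr_V\kera{Q}$ that the verifier can \emph{uncompute to $\ket 0$} after a correct inversion. Because $\ket{Q^H}$ and $\ket{T^H}$ purify the same state on the message register, any prover channel on $M$ moves both by the same amount in fidelity (Lemma~\ref{lemma:compm}); thus a cheating prover who steers $\ket{Q^H}$ toward the acceptance subspace of $R$ must simultaneously steer $\ket{T^H}$ away from itself, and this is what the trap-branch measurement detects (including phase attacks, since it is a projection onto a single state). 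A geometric lemma (Lemma~\ref{lemma:maxproj}) then bounds the sum of the two branch probabilities. The role of smooth-computability in the paper is \emph{not} to make $R$ robust; it is to let the verifier use quantum rejection sampling so that the uniform trap state shares the right marginal on $M$ with the nonuniform query state. Your proposal would need to incorporate this trap idea, or some substitute that actually pins down the post-prover state, to be sound.
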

A distribution $\mathcal{D}$ is smooth-computable if its maximum and
minimum are only polynomially larger and smaller than the average, and
given any $x$, the probability $\Pr_{x\sim \mathcal{D}}[x]$ can be
computed efficiently. In particular, this rules out uniform
distributions, which are essential in existing worst-case to average-case
reductions~\cite{FF93,Ajtai96,Mic04,Mic07,Regev04_2,Ajtai97,Regev05}. We
also take an initial step towards the case of one-way
\emph{functions}. We show that given \emph{quantum-sampling oracle} of a one-way
function, any locally quantum reduction will result in the same negative consequence that \class{coNP} $\subseteq$ \class{QIP($2$)}.

As indicated before, direct translation of the classical proofs will
only give quantum interactive proof systems with more than three
messages, which will already coincide with \class{PSPACE}. We develop
a checking technique to mitigate the difficulty and design a
two-message protocol. Interestingly, our protocol illustrates a new
way of designing \class{QIP($2$)} than the few existing
ones~\cite{Rosgen09,Hayden14,kobayashi13}.

Admittedly, the containment  \class{coNP} $\subseteq$
QIP($2$) is not as strong as the classical result that the polynomial
hierarchy collapses. So far there is very little known about
\class{QIP($2$)} other than the simple fact that
\class{QMA}$\subseteq$ \class{QIP($2$)}$\subseteq$
\class{QIP($3$)}=\class{PSPACE}, and there are only a few
problems~\cite{Rosgen09,Hayden14,kobayashi13} proven to be in \class{QIP}($2$) that are not known to be in AM or QMA. It is an open question to pin
down where \class{QIP}($2$) stands precisely in the complexity zoo. As a partial
progress, we show in Section~\ref{sec:conp_qip2} an oracle problem
which is in \class{coNP}$^\mathcal{O}$ but not in  \class{QIP}($2$)$^\mathcal{O}$.

\paragraph{Overview of the proof of our main result.} In order to
describe the approach, we give more details about the classical
approach.  The classical proof strategy is to assume that a language
$L$ has a random reduction to another problem $L'$, and then construct
an interactive proof for $\overline{L}$ (the complement of $L$) that
induces a collapse.  For example, if $L$ is NP-complete and has a
random reduction to inverting a one-way permutation, then there is a
two round protocol for deciding if $x \in \overline{L}$.  The verifier
runs the generator $G$ to generate the queries for the one-way
permutation and sends them to the prover.  The prover then sends back
the answers.  Because the verifier can evaluate the one-way
permutation, the prover's answers can be checked, and then $R$ is run
to decide if $x\in L$.  Finally, the verifier can give the opposite
answer.  This results in a two round protocol for $\overline{L}$.
Therefore if such a reduction exists, then \class{coNP} $\subseteq$ AM, giving
a collapse.  There are much more complicated constructions when other
average-case problems are considered, for example, for reducing
worst-case NP-complete problems to distributional \class{NP} problems.
It is more difficult to find an interactive proof in this case because
if the prover answers $y\not\in L'$, then the verifier has no way to
verify this.  Nevertheless, with classical non-adaptive reductions it
is possible to construct a protocol for $\overline{L}$~\cite{BT06}.

Carrying over the proof strategy to the quantum setting requires
resolving several difficulties.  We use the unitaries $G$ and $R$ from
the reduction to construct a quantum interactive proof for
$\overline{L}$.  First the unitary $G$ is used to create superposition
queries which are sent to the prover.  An honest prover will answer
the superposition queries for the average-case language and send the
states back.  The verifier can then use the unitary $R$ to decide
whether to accept or reject.  The first difficultly that arises in
following this approach is that superposition queries are being used,
which makes it harder to verify that the prover is not cheating than
it is for classical answers.  Another limiting factor also immediately
arises in the quantum case that does not exist in the classical case.
For classical reductions and protocols, it is fine to create a
protocol with many (but still constant) rounds of communication,
because there is an equivalent two round protocol.  This is done
in~\cite{BT06} where upper and lower bound protocols are used to bound
the sizes of sets.  However, in the quantum case, we are limited to
finding quantum interactive proofs with only two rounds to begin with,
since \class{QIP}($m$)~=~\class{QIP}(3)~=~\class{PSPACE}.  Finding a quantum protocol that is
limited even to three rounds would only allow the conclusion that \class{coNP}
$\subseteq$ \class{QIP}($m$)~=~\class{PSPACE}, which does not yield a non-trivial
result.

The main technical challenge is to ensure that the prover provides the
answers honestly in superposition to the average-case problem.  A
cheating prover would try to return some other state that makes the
unitary $R$ answer in the opposite way than it should.  Should the
prover return such a state, the verifier must be able to detect this.
Our approach is to let the verifier create a superposition of two
states: the query state that is needed for the reduction, and a trap
state with the property that it can be used to detect that the prover
is cheating.  We show that there is a trap state so that whenever the
prover changes the query part of the superposition, then the trap part
of the superposition must also change, and that this can be detected
by the verifier.

\paragraph{Future directions.} There are several open questions.  Do
adaptive and/or non-smooth-computable quantum reductions from
\class{NP}-complete problems to inverting one-way permutations exist?
Can we generalize the result for permutations to other functions such
as 2-to-1 functions or even preimage-size-verifiable functions? We
observe that (Section~\ref{sec:fpsf}) if the oracle in the reductions
is capable of quantum sampling all solutions, then the existence of
such reductions from NP-hard problems to inverting k-to-1 surjective
functions also implies  \class{coNP} $\subseteq$
\class{QIP}($2$). Moreover, for cryptographic primitives which have more structures than one-way functions, it is possible that we can construct \class{QIP}($2$) protocol easily. Hence, we would like to know if we can also show that basing these cryptographic primitives on $\class{NP}$-hard problems via quantum reduction is unlikely. On the other hand, we are also interested to know if we can
find cryptographic primitives whose security are not likely to be
based on \class{NP}-complete problems under classical reductions, but
can be if quantum reductions are allowed. In the last, is it possible to rule out a
quantum reduction from NP-complete problems to average-case problems
in NP? Since we know very little about \class{QIP}($2$), can we show that
QIP($2$) $\neq$ QMA, \class{QIP}($2$) $\neq$ AM, \class{QIP}($2$) $\neq$\class{QIP}($3$), or
coNP$\nsubseteq$ \class{QIP}($2$)?

\section{Preliminaries}

For a finite set $X$, $\abs{X}$ denotes the size of $X$.  We use
$x\gets X$ to mean that $x$ is drawn uniformly at random from
$X$.  $\poly(\cdot)$ denotes an unspecified polynomial, and
$\negl(\secpar)$ denotes a negligible function in $\secpar$.  A
function $\epsilon(\secpar)$ is \emph{negligible} if for all
polynomials $p(\secpar)$, $\epsilon(\secpar) < 1/p(\secpar)$ for large
enough $\secpar$.  Classical efficient computation is described by
probabilistic polynomial time (PPT) algorithms.

We assume basic familiarity with quantum information formalism.  In
this paper, \emph{quantum register} represents a collection of qubits
that we view as a single unit.  We typically use capital letters to
denote a register and the Hilbert space associated with it.  A quantum
channel $\Phi$ describes any physically admissible transformation of
quantum states, which is mathematically a completely positive,
trace-preserving linear map.

We recall the definitions of \emph{quantum interactive proofs} (QIP)
and \emph{one-way permutations} (\owp).

\begin{definition}[\class{QIP($m$)}]
  A promise problem $A = (A_{yes},A_{no})$ is in the complexity class \class{QIP($m$)} if there exists a polynomial-time quantum
  verifier which exchanges at most $m$ quantum messages of length
  $O(\poly(|x|))$ with a prover and has the properties:
\begin{itemize}
\item (Completeness) For $x\in A_{yes}$, there exists a prover who can
  convince the verifier to accept $x$ with probability at least $2/3$.
\item (Soundness) For $x\in A_{no}$, no prover can convince the
  verifier with probability greater than $1/3$.
\end{itemize}
\end{definition}

Without loss of generality, the prover and the verifier can be
described as unitaries. It has been shown that
\class{QIP($m$)}=\class{QIP($3$)}=\class{PSPACE} for
$m\ge 3$~\cite{KW00,JJUW11}. It is known that completeness and
soundness can be reduced to negligibly small~\cite{JUW09}. In this
work, we focus on the class \class{QIP($2$)}.

\begin{definition}[One-way permutation]
\label{def:owp}
$f: \{0,1\}^*\rightarrow \{0,1\}^*$ is a  one-way permutation if
  \begin{itemize}
  \item for every $n$, $f$ is a polynomial-time computable permutation
    over $\{0,1\}^n$ by either quantum or classical algorithms, and
  \item for every quantum polynomial-time algorithm $A$,
    $\Pr_{x\gets \{0,1\}^n}(A(f(x)) = x) = \negl(\secpar)$.
  \end{itemize}   
\end{definition}

We denote inverting a one-way permutation as $\iowp$.

\subsection{Formal definitions of classical reductions}
\label{sec:clrr_w2a}

We review two defintions of worst-case to average-case reductions that
are central in the classical literature. We denote $\mathcal{P}'$ an
arbitrary decision, promise or search problem.  We will only consider
the case where $\mathcal{P}'$ corresponds to inverting one-way
permutations. We recall the basic notion of a \emph{distributional}
problem.

\begin{definition}[Distributional problem]
  \label{def:dp}
  Let $\mathcal{P}'$ be a problem and $\mathcal{D}$ a collection of
  distributions $\{\mathcal{D}_n\}_{n\in \mathbb{N}}$.  The distributional problem $(\mathcal{P}',\mathcal{D})$ is: given an
  instance $x$ chosen randomly according to $\mathcal{D}_n$, compute
  $\mathcal{P}'(x)$.
\end{definition}

One important notion of worst-to-average reductions is due to
Feigenbaum and Fortnow~\cite{FF93}, which we decribe here (with minor
adaption).

\begin{definition}[non-adaptive locally random reduction
  $(G,R)$]~\label{def:lrr} Let $k$ and $\ell$ be variables polynomial in the
  input length $n$, and $r$ is a random string chosen uniformly from
  $\{0,1\}^{\ell}$. A decision problem $\mathcal{P}$ is non-adaptively locally random reducible to a distributional
  problem $(\mathcal{P}',\mathcal{D})$ with error $\epsilon$ if there
  are polynomial-time algorithms $R$ and $G$ satisfying two conditions:
\begin{enumerate}
\item 
For $n\in \mathbb{N}$ and $x\in \{0,1\}^n$, it holds that $\mathcal{P}(x) = R(x,r, \mathcal{P}'(G(1,x,r)),\dots,\mathcal{P}'(G(k,x,r)) )$ for at least $1-\epsilon$ fraction of all $r\in\{0,1\}^{\ell}$.  
\item For $n\in \mathbb{N}$, $\{x_1,x_2\}\in \{0,1\}^n$, $1\leq i\leq k$, and $y$ a support of $\mathcal{D}$, it holds that $\Pr_r[G(i,x_1,r)=y] = \Pr_r[G(i,x_2,r)=y]=\Pr[y\sim \mathcal{D}_{n}]$.  
\end{enumerate}
\end{definition}

We note that Definition~\ref{def:lrr} is equivalent to the Definition
2.1 in~\cite{FF93} when $\epsilon=1/4$.  One can let each query be generated according to different fixed distributions in the locally random reduction. However, Definition~\ref{def:lrr} which assumes all queries are drawn from the same distribution is still general. This can be made without loss of generality since one can apply a random permutation to the queries before sending them to the oracle and undo the permutation before applying $R$.  This way, the
distributions of each query are the same.

If we only consider $\mathcal{D}$ to be the uniform distribution, then
this reduction is a special case considered by Feigenbaum et al.\
in~\cite{FKN90}.  One can also define \emph{adaptive} locally random
reductions by allowing the algorithm $G$ in Definition~\ref{def:lrr}
to generate queries depending on the previous queries and answers.

If $\mathcal{P} = \mathcal{P'}$, then the reduction is also called a
random-self reduction~\cite{FF93}.  It has been shown that the set of
complete problems in \class{PSPACE}, \class{EXP} and \class{\#P} are
random-self reducible~\cite{FF93}.  On the other hand, it has been
shown that \class{NP}-complete problems are not non-adaptive
random-self reducible unless the polynomial hierarchy collapses to the
third level~\cite{FF93}.

Note that the definition of Feigenbaum and Fortnow~\cite{FF93} assumes a perfect
solver for the average-case problem. This restriction is weakened in a
later work by Bogdanov and Trevisan~\cite{BT06}, which gives another
important notion of worst-to-average reductions below. To describe it,
we first define $\delta$-close problems.

\begin{definition}
  \label{def:close}
  A problem $\mathcal{P}''$ is {$\delta$-close} to another problem
  $\mathcal{P}'$ with respect to $\mathcal{D}$ if for all $n$, $\Pr_{x\sim \mathcal{D}_n} (\mathcal{P}''(x)\neq \mathcal{P}'(x)) < \delta$.
\end{definition}

\begin{definition}[non-adaptive worst-case to average-case reduction]
  \label{def:war} Let $k$ and $\ell$ be variables polynomial in the
  input length $n$, and $r$ is a random string chosen uniformly from
  $\{0,1\}^{\ell}$.  A decision problem $\mathcal{P}$ is non-adaptive worst-case to average-case reducible to
  $(\mathcal{P}',\mathcal{D})$ with average hardness $\delta$ and
  error $\epsilon$ if there are polynomial-time algorithms $R$ and $G$
  satisfying that:

 \begin{itemize}    
 \item[] for any $n\in \mathbb{N}$, and on all inputs
   $x\in \{0,1\}^n$, let $y_1,\dots,y_{k}$ be the outputs of $G(x,r)$.
   For any $\mathcal{P}''$ which is $\delta$-close to $\mathcal{P}'$
   with respect to $\mathcal{D}$,
      \[
        \Pr_r[R(x,r,\mathcal{P}''(y_1),\dots,\mathcal{P}''(y_{k}))=\mathcal{P}(x)]>1-\epsilon
        \, .\]
    \end{itemize}
\end{definition}
Similarly, an adaptive worst-case to average-case reduction is defined
by including previous queries and answers to the arguments of $G$.

It has been shown that \class{NP}-complete problems are not
non-adaptive worst-case to average-case reducible to themselves,
unless the polynomial hierarchy collapses to the third
level~\cite{BT06}.  In addition, the existence of a non-adaptive
worst-case to average-case reduction from \class{NP}-hard problem to
inverting a one-way function implies that the polynomial hierarchy collapses to the second
level~\cite{Akavia2006} and the existence of a worst-case to
average-case reduction from an \class{NP}-hard problem to inverting a
size verifiable one-way function implies that the polynomial hierarchy
collapses to the second level~\cite{Akavia2006,Bogdanov2015}.

\section{Formalizing quantum reductions}
\label{sec:qred_def}

In this section we define the quantum analogues of the two classical
notions of worst-to-average reductions (Definition~\ref{def:lrr}
and~\ref{def:war}). We then describe some characterizations and examples that illustrates diverse features of quantum reductions.

\begin{definition}[non-adaptive locally quantum
  reduction $(G,R)$]\label{def:qlrr}
  A decision problem $\mathcal{P}$ is non-adaptive locally
    quantum reducible to a distributional problem
  $(\mathcal{P}',\mathcal{D})$ with error $\epsilon$ by using $k$ queries if there are two polynomial-time
  implementable unitaries $R$ and $G$ such that for all $n$ and
  $x\in \{0,1\}^n$
  \begin{itemize}
  \item The generator $G$ creates $k$ superposition queries, with 
    query amplitudes based on the distribution $\mathcal{D}$: $G|0\rangle_{MV}|x\rangle =
      |Q_{x,1}\rangle\otimes\cdots\otimes|Q_{x,k}\rangle\ket{x} $
      where $|Q_{x,i}\rangle = \sum_{q\in{\mathbb{Z}_2^{m}}} \sqrt{d_q}|q,
      0\rangle_M|w_{x,i}(q)\rangle_V$ 
    for $i\in [k]$. Note that $d_q$ is the probability that $q$ is drawn from $\mathcal{D}_{n}$ and the register $V$ for the state $\ket{w_{x,i}}$ is the work register, which could be in arbitrary state.
  \item $R$ takes responses of the queries $|Q_{x,1}^H,\dots,Q_{x,k}^H\rangle$ and decides whether or not
    $\mathcal{P}(x)$ is true:
    \begin{eqnarray}
        R|Q_{x,1}^H,\dots,Q_{x,k}^H\rangle  =
      \sqrt{p}|\mathcal{P}(x)\rangle|\psi_{x}^0\rangle+
      \sqrt{1-p}|1-\mathcal{P}(x)\rangle|\psi_{x}^1\rangle \nonumber 
    \end{eqnarray}
      where $p\geq 1-\epsilon$ and $|Q_{x,i}^H\rangle = \sum_{q\in{\mathbb{Z}_2^{m}}}
      \sqrt{d_q}|q,\mathcal{P}'(q)\rangle_M|w_{x,i}(q)\rangle_V$, 
      for $i\in [k]$. 
  \end{itemize}
  \label{def:qff}
\end{definition}

\begin{definition}[non-adaptive quantum worst-case to average-case
  reduction]
  \label{def:qwar}
  A decision problem $\mathcal{P}$ is non-adaptive quantum
    worst-case to average-case reducible to
  $(\mathcal{P}',\mathcal{D})$ with average hardness $\delta$ and error $\epsilon$ if there
  are polynomial-time computable unitaries $R$ and $G$ such that for
  any $n\in \mathbb{N}$ and $x$
\begin{itemize}
    \item The generator $G$ creates $k$ superposition queries:
    \begin{eqnarray}
        G|0\rangle_{MV}|x\rangle =
      |Q_{x,1}\rangle\otimes\cdots\otimes|Q_{x,k}\rangle\ket{x}, 
      \label{eq:qbt_q}
    \end{eqnarray}
    where $|Q_{x,i}\rangle = \sum_{q\in{\mathbb{Z}_2^{m}}} c_{x,q,i}|q,
      0\rangle_M|w_{x,i}(q)\rangle_V, \quad \text{for } i\in [k]$. Note that the coefficients $c_{x,q,i}$ for $q\in{\mathbb{Z}_2^{m}}$ could be any complex numbers such that the sum of absolute squares is $1$. 
    \item $R$: for any $\mathcal{P}''$ which is $\delta$-close to
      $\mathcal{P}$ with respect to $\mathcal{D}$, 
    \begin{eqnarray}
        R|Q_{x,1}^H,\dots,Q_{x,k}^H\rangle  = \sqrt{p}|\mathcal{P}(x)\rangle|\psi_{x}^0\rangle+ \sqrt{1-p}|1-\mathcal{P}(x)\rangle|\psi_{x}^1\rangle,
        \label{eq:qbt_a}
    \end{eqnarray}
where $p\geq 1-\epsilon$ and $|Q_{x,i}^H\rangle = \sum_{q\in{\mathbb{Z}_2^{m}}}
      c_{x,q,i}|q,\mathcal{P}''(q)\rangle_M|w_{x,i}(q)\rangle_V$, for $i\in [k]$. 
\end{itemize}
The variables $m$ and $k$ are polynomial in the input length $n$.   
\end{definition}

Compared to locally quantum reductions, quantum worst-case to
average-case reductions do not require the queries to be drawn from a
certain distribution.  Instead, we consider an oracle for
$\mathcal{P}'$ that can err sometimes, which is captured by
$\delta$-close problems $\mathcal{P}''$. $1-p$ is called the error of the reduction.  The choice of $p=2/3$ is
arbitrary, since it can be reduced effectively.

\subsection{Separation examples}
\label{sec:sepex}

We give two examples demonstrating the distinct landscapes of
classical and quantum worst-case to average-case reductions.  Namely,
relative to an oracle and under reasonable computational assumptions,
there exist a worst-case problem and an average-case problem such that
no classical reductions exist whereas they admit an efficient quantum
reduction. In fact, the quantum reduction issues non-adaptive
\emph{classical} queries only. This makes the separation examples
strong.

The idea behind the examples is simple.  We design the average-case
problem in such a way that to make a meaningful query to a solver for
this average-case problem, one has to solve a problem that is (assumed
to be) hard for classical algorithms but easy on a quantum computer.
Our first example is based on a oracle problem provably hard
classically (Simon's Problem), and the quantum reduction needs quantum
access to the oracle.  The second example needs to assume the
existence of problem in BQP that is outside BPP (e.g.,
factorization). However we remove the need of quantum access to the
oracle as a reward. Both constructions rely on the following
assumption.

\begin{asn} There exists language $L\notin$ \class{BQP} (hence
  $L\notin$ \class{BPP} too) that admits a random self-reduction
  $L\red{(R,G)} (L,D)$ for some distribution $D$.
\label{asn:tqbf}
\end{asn}
A candidate is the \class{PSPACE}-complete problem True Quantified Boolean Formula (\lang{TQBF}), which
is known to have a \emph{non-adaptive} random
self-reduction~\cite{FF93}.  Assumption~\ref{asn:tqbf} will follow, if
\class{BQP} $\subsetneq$ \class{PSPACE}.  Hereafter we treat $G$
as non-adaptive in Assumption~\ref{asn:tqbf} for simplicity.

Let $N = 2^n$, and for each $i \in [N]$, let
$f_i: \bool^n \to \bool^n$ be some function, and $s_i\in \bool^n$.  We
define an oracle $O: = O_{s_0, \ldots, s_{N-1}}$ that generalizes
Simon's oracle~\cite{Simon97}.
\begin{eqnarray*}
  O: \ket{i,x,y,z} \to \ket{i,x, y\oplus f_i(x), z} \, ; \quad \text{ where } f_i(x) = f_i(x') \text{ iff.  } x' = x\oplus s_i \, .
\end{eqnarray*}
We assume that all $s_i, i\in [N]$ are chose uniformly at random.  As
an immediate corollary of Simon's result.  We have that

\begin{lemma}
  Given $O$, any classical algorithms needs $\Omega(2^{n/2})$ queries
  to $O$ to find $s_i$ for some $i \in [N]$.  For any $i \in [N]$,
  there is a quantum algorithm that can find $s_i$ with $O(n^2)$
  queries and time.
\label{lemma:gsimon}  
\end{lemma}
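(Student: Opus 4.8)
The statement has two halves, and essentially all of the content is inherited from Simon's theorem~\cite{Simon97}; the only genuinely new point is that the classical lower bound survives the presence of $N=2^n$ independent instances. The plan is therefore: (i) obtain the quantum algorithm by running Simon's algorithm on a single slice of $O$, and (ii) obtain the classical lower bound from Simon's single-instance bound together with a convexity (birthday) argument showing that splitting the queries among the $N$ slices gives no advantage.

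\emph{Quantum upper bound.} Fix the target index $i\in[N]$. Hard-wiring the first register of $O$ to $\ket{i}$ (and leaving $z$ idle) yields a unitary acting as $\ket{x,y}\mapsto\ket{x,\,y\oplus f_i(x)}$, i.e., exactly the bit-oracle for $f_i$, which by hypothesis obeys Simon's promise with hidden shift $s_i$. I would then run Simon's algorithm verbatim: prepare the uniform superposition over $x\in\bool^n$, apply the $f_i$-oracle, apply Hadamards on the $x$-register, and measure, obtaining---at the cost of one query per iteration---a uniformly random $y$ with $y\cdot s_i=0$ over $\mathbb{F}_2$; after $O(n)$ iterations the sampled $y$'s span $s_i^{\perp}$ with probability $1-2^{-\Omega(n)}$, and Gaussian elimination then recovers $s_i$ (the degenerate case $s_i=0$ being flagged by the samples spanning all of $\bool^n$). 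This uses $O(n)$ queries and $O(n^2)$ time, as established by~\cite{Simon97}.

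\emph{Classical lower bound.} Take the oracle in the natural randomized instantiation implied by ``generalizes Simon's oracle'': each $f_i$ is a uniformly random function consistent with its uniformly random shift $s_i$. Let $A$ be a randomized algorithm making $Q$ queries to $O$ and outputting a pair $(i,s)$; we bound its success probability over the choice of $(s_0,\dots,s_{N-1})$ and $A$'s coins. Because the instances are independent, the standard Simon coupling applies slicewise: until $A$ first queries two distinct points $x,x'$ in slice $j$ with $f_j(x)=f_j(x')$ (equivalently $x'=x\oplus s_j$), the answers seen from slice $j$ are distributed exactly as those of a uniformly random injection independent of $s_j$; and if $q_j$ is the number of distinct queries to slice $j$, the probability over $s_j$ of such a collision is at most $\binom{q_j}{2}2^{-n}$. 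Union-bounding over slices, the chance that some slice collides is at most $\sum_j\binom{q_j}{2}2^{-n}\le\binom{Q}{2}2^{-n}$, where $\sum_j\binom{q_j}{2}\le\binom{Q}{2}$ (using $\sum_j q_j\le Q$) is the crucial convexity step---concentrating all queries in one slice is optimal, so the $N$ instances buy the adversary nothing. Conditioned on no collision anywhere, whichever index $i$ the transcript leads $A$ to name, $s_i$ is uniform over a set of size at least $2^n-\binom{q_i}{2}\ge 2^n-Q^2$, so $A$'s output is correct with probability at most $(2^n-Q^2)^{-1}$. Hence $A$ succeeds with probability at most $\binom{Q}{2}2^{-n}+(2^n-Q^2)^{-1}$ (the bound being vacuous once $Q^2\ge 2^n$, i.e., once $Q=\Omega(2^{n/2})$ already), which is $o(1)$ unless $Q=\Omega(2^{n/2})$.

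The quantum direction is a routine instantiation of Simon's algorithm and should present no difficulty. The only step that needs care is the ``robustness to many instances'' in the lower bound, which comes down to the convexity estimate $\sum_j\binom{q_j}{2}\le\binom{Q}{2}$ together with the (also standard) fact that the single-instance collision/coupling argument can be run independently on each slice, since the shifts $s_j$ are mutually independent. Beyond this bookkeeping I do not anticipate a real obstacle, consistent with the lemma being billed as an immediate corollary of Simon's result.
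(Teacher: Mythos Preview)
The paper does not actually prove this lemma: it is stated as ``an immediate corollary of Simon's result'' with no further argument. Your proposal is correct and supplies precisely the details the paper omits. The quantum half is the obvious restriction of $O$ to a single slice followed by Simon's algorithm (and your $O(n)$-query bound is in fact tighter than the $O(n^2)$ in the statement). For the classical half, the only point beyond the single-instance Simon bound is that the $N$ independent instances do not help a classical adversary; your convexity step $\sum_j \binom{q_j}{2} \le \binom{\sum_j q_j}{2} \le \binom{Q}{2}$ is exactly the right way to make this precise, even though the paper leaves it implicit.
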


\mypar{Construction 1} We construct our first separation example.  
\begin{itemize}
\item $L_1 = \lang{TQBF} = \{\phi = \phi(v_1,\ldots, v_n)\}$
  containing satisfiable quantified $n$-variable formulae in
  3-CNF.  Let $L_1^O$ be the language $L_1$ relative to oracle $O$,
  which simply ignores $O$.
\item $\hat L_1^O: = \{x=(i,s,\phi): \text{$s = s_i$ and $\phi$ is
    true}\}$.  We associate $\hat L_1^O$ a distribution $\hat D_1$,
  which is uniform on $N\times \bool^n$ and samples a formula
  according to $D$ (the distribution in Assumption~\ref{asn:tqbf}).
\end{itemize}

\begin{theorem}
  Under Assumption~\ref{asn:tqbf}, there does not exist a PPT
  reduction from $L_1^O$ to $(\hat L_1^O, \hat D_1)$.  In contrast,
  there is a quantum poly-time \emph{non-adaptive} reduction
  $L_1^O \red{R_1^Q,G_1^Q} (\hat L_1^O,\hat D_1)$.
  \label{thm:sepsimon}
\end{theorem}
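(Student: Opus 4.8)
The plan is to prove the two halves of the statement separately. For the positive part, I would build the quantum reduction by composing Simon's algorithm with the random self-reduction guaranteed by Assumption~\ref{asn:tqbf}. On input $\phi$, the generator $G_1^Q$ samples a string $r$ and runs the classical generator $G$ of the non-adaptive random self-reduction for $\lang{TQBF}$ to produce formulae $\phi_1,\dots,\phi_k$, each distributed according to $D$; in addition, for each $j$ it draws $i_j\gets[N]$ and runs Simon's algorithm (Lemma~\ref{lemma:gsimon}) on $O$ with $O(n^2)$ quantum queries to recover $s_{i_j}$. It outputs the $k$ \emph{classical} queries $x_j=(i_j,s_{i_j},\phi_j)$. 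Since the middle coordinate of $x_j$ is the correct value $s_{i_j}$, we have $\hat L_1^O(x_j)=[\phi_j\in\lang{TQBF}]$, which is exactly the answer the self-reduction's reducer $R$ expects for the query $\phi_j$. Hence $R_1^Q$ outputs $R(\phi,r,\hat L_1^O(x_1),\dots,\hat L_1^O(x_k))$, which equals $L_1^O(\phi)=[\phi\in\lang{TQBF}]$ for at least $3/4$ of the strings $r$; repeating with independent $r$'s and taking a majority amplifies correctness to $1-\negl(n)$. The queries $x_1,\dots,x_k$ are fixed before any answer is read and are ordinary bit strings, so the reduction is non-adaptive and issues only classical queries to the average-case solver; the sole quantum use of $O$ is inside Simon's algorithm.

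For the negative part, I would argue by contradiction. Suppose a \class{PPT} reduction $R$, with oracle access to a perfect solver for $\hat L_1^O$ and to $O$, decides $L_1^O$ when the $s_i$'s are drawn uniformly at random. The crux is that this pair of oracles is useless to a classical polynomial-time machine. By Lemma~\ref{lemma:gsimon}, the $\poly(n)$ classical queries $R$ makes to $O$ reveal no $s_i$: the joint view is statistically $\negl(n)$-close to the view $R$ would have against lazily sampled independent random functions. Consequently, conditioned on its history each query $(i,s,\phi')$ that $R$ sends to $\hat L_1^O$ satisfies $s=s_i$ with only negligible probability (the history rules out at most polynomially many candidate values of $s_i$), so by a union bound over the polynomially many queries, with probability $1-\negl(n)$ every answer returned by $\hat L_1^O$ is $0$. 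On that event $R$'s entire transcript is independent of the two oracles, so $R$ is reproduced up to statistical distance $\negl(n)$ by a plain \class{PPT} algorithm $B$ that answers $O$-queries with lazy random functions and $\hat L_1^O$-queries with $0$; since $B$ ignores the $s_i$'s, it decides $\lang{TQBF}=L_1^O$ with probability $\ge 2/3-\negl(n)$ over its own coins, contradicting $\lang{TQBF}\notin\class{BPP}$ which follows from Assumption~\ref{asn:tqbf}.

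The step I expect to be the main obstacle is the uselessness claim in the negative part: a priori a machine could try to exploit $\hat L_1^O$ as a membership test for the predicate $s=s_i$, bootstrapping partial information gleaned from $O$-queries into a successful $\hat L_1^O$-query. Ruling this out requires the standard oracle-separation hybrid of lazily sampling $O$ as injective (hence structureless) random functions while answering $\hat L_1^O$ with $0$ throughout, and carefully bounding the probability that this simulation ever diverges from the real oracles --- using Simon's lower bound on the $O$ side and a per-query collision/guessing bound of order $2^{-n}$ on the $\hat L_1^O$ side. The remaining ingredients --- correctness and non-adaptivity of $(G_1^Q,R_1^Q)$ and the amplification --- are routine.
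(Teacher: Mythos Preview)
Your proposal is correct and follows essentially the same approach as the paper: for the positive part you compose Simon's algorithm with the random self-reduction for $\lang{TQBF}$ exactly as the paper does, and for the negative part your simulation-by-lazy-sampling argument fleshes out the paper's terse claim that a classical reduction unable to find any $s_i$ renders the average-case solver useless, yielding a $\class{BPP}$ decider for $\lang{TQBF}$ in contradiction to Assumption~\ref{asn:tqbf}. Your discussion of the bootstrapping obstacle and the hybrid argument is in fact more careful than the paper's one-line sketch, but the underlying strategy is identical.
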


\begin{proof}
  Let $A$ be an algorithm that solves the average-case problem
  $(\hat L_1^O, \hat D_1)$.  For simplicity, we assume that $A$ is a
  perfect decider, i.e., for a random input
  $x = (i,s,\phi)\gets \hat D_1$, $A(i,s,\phi) = 1$ iff.  $s= s_i$ and
  $\phi = 1$.  Any classical reduction is unable to find $s_i$ in
  polynomial time, hence the solver $A$ is useless.  Formally speaking,
  if there were such a reduction $L_1^O \red{}(\hat L_1^O,\hat D_1)$,
  one can turn it into an efficient solver for Simon's problem or an
  efficient decider for $L$.  This violates Lemma~\ref{lemma:gsimon} or
  Assumption~\ref{asn:tqbf}.

  For the second part, we construct a quantum reduction
  $(R^Q_1,G^Q_1)$ as follows.  Recall that there is a random
  self-reduction $L \red{(R,G)} (L,D)$.  Given a worst-case input
  $\phi$, $G_1^Q(\phi)$ runs $G(\phi)$ to get random
  $\{\phi_j\}_{j=1}^k$.  Then for $j=1,\ldots k$, $G_1^Q$ generates
  random $i_j \gets [N]$, and runs Simon's algorithm to find $s_{i_j}$
  efficiently.  Then the queries to $A$ are
  $\{i_j,s_{i_j}, \phi_j\}_{j=1}^k$, which are correctly distributed
  according to $\hat D_1$.  Therefore $A$ will respond correctly with
  $\{\phi_j \overset{?}{=}1\}$.  Then $R_1^Q$ runs the decision
  procedure $R$, which correctly decides $\phi$.
\end{proof}

\begin{remark} We have designed $O$ to encode exponentially many
  instances of Simon's problem for the technicality of
  \emph{non-uniform} reductions.  Because otherwise, a classical
  reduction could hardwire the solution $s$ and make use of an
  average-case solver.
\end{remark}

\mypar{Construction 2} We give another separation example.  It is still
in the oracle setting, and we need to make an additional
assumption.  What we gain is that the quantum reduction does not need
quantum access to the oracle, as opposed to Example 1 where we need to
run Simon's algorithm with quantum access to the oracle.

\begin{asn}
  There exists a classically secure one-way function $f: X\to Y$,
  which is invertible by an efficient quantum algorithm.
\label{asn:owf}  
\end{asn}

A natural candidate would be adaption of \lang{Factorization}.  Let
$(p,q)\gets \Gen(1^n)$ be an efficient algorithm that generates two
large primes at random, and define $f(p,q) = pq$.  Then it is
reasonable to assume that there exists a $\Gen$ algorithm relative to
which $f$ is hard to invert.  In fact this is necessary for the
\class{RSA} assumption, which is the basis of modern public-key
cryptography.  This assumption is hence likely to be true given the
current state of art.

Given a function $f$ as in Assumption~\ref{asn:owf}, we define an
oracle $H: i \mapsto y_i$ for $i \in [N]$.  Here we sample
$z_i \gets X$ randomly and set $y_i : = f(z_i)$.

\begin{itemize}
\item $L_2 = \lang{TQBF} = \{\phi = \phi(v_1,\ldots, v_n)\}$
  containing satisfiable quantified $n$-variable formulae in
  3-CNF.  Let $L_2^O$ be the language $L$ relative to oracle $H$, which
  ignores $H$.
\item $\hat L_2^H: = \{x=(i,z,\phi): \text{$f(z) = y_i$ and $\phi$ is
    true}\}$.  We associate $\hat L_2^H$ a distribution $\hat D_2$,
  which is uniform on $[N] \times X$ and samples a formula according
  to $D$ (the distribution in Assumption~\ref{asn:tqbf}).
\end{itemize}

\begin{remark} For the same reason as above, we introduce the oracle
  $H$ to encode superpolynomial-many instances of inverting $f$ to
  avoid a non-uniform classical reduction that can hardwire solutions
  to (at most poly-many) inversion instances.
\end{remark}

Following similar arguments to Theorem~\ref{thm:sepsimon}, we can
prove the theorem below.  The only change is that in the quantum
reduction, we query \emph{classically} a random index $i_j$ to $H$,
and obtain $y_{i_j}$.  Then we run Shor's
algorithm to find $z_{i_j} : = f^{-1}(y_{i_j})$, and then form correct
queries to the solver of $(\hat L_2^H, \hat D_2)$.

\begin{theorem}
  Under Assumption~\ref{asn:tqbf} and~\ref{asn:owf}, there does not
  exist a classical reduction from $L_2^H$ to $(\hat L_2^H, \hat D_2)$.  In
  contrast, there is a quantum poly-time \emph{non-adaptive} reduction
  $L_2^H \red{R_2^Q,G_2^Q} (\hat L_2^H,\hat D_2)$.
  \label{thm:sepfact}  
\end{theorem}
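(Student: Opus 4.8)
The plan is to follow the template of Theorem~\ref{thm:sepsimon} almost verbatim, substituting the oracle $H$ for the generalized Simon oracle and substituting ``query $H$ classically, then run Shor's algorithm to recover a preimage $z_i$ of $y_i$'' for ``run Simon's algorithm to recover $s_i$.'' Accordingly the proof has two halves: (i) rule out any classical (PPT) reduction $L_2^H \to (\hat L_2^H,\hat D_2)$, and (ii) exhibit the quantum non-adaptive reduction $L_2^H \red{R_2^Q,G_2^Q} (\hat L_2^H,\hat D_2)$.

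For (i), I would fix the perfect average-case decider $A$ for $(\hat L_2^H,\hat D_2)$, so $A(i,z,\phi)=1$ iff $f(z)=y_i$ and $\phi\in\lang{TQBF}$, and suppose for contradiction that some PPT machine $R$, with oracle access to $A$ and to $H$, decides $L_2^H$. The key observation is that a query $(i,z,\phi')$ to $A$ is only \emph{informative} — i.e.\ can reveal anything about membership of $\phi'$ in $\lang{TQBF}$ — when $z$ is a genuine preimage of $y_i$ under $f$; on every other query $A$ returns $0$ identically. Since $R$ makes only $\poly(n)$ queries to $H$, it learns at most polynomially many of the $y_i=f(z_i)$, and for each such $y_i$ producing a valid $z$ is exactly inverting the classically one-way $f$ on a fresh random image, while for indices it never queried it has no information about $y_i$ at all. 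Hence, by Assumption~\ref{asn:owf} and a union bound over $R$'s queries, either $R$ produces an informative query with non-negligible probability — in which case one extracts from $R$ an efficient classical inverter for $f$, contradicting Assumption~\ref{asn:owf} — or with overwhelming probability every query to $A$ is answered $0$, so $R$ behaves like $R$ run with the constant-$0$ oracle $\mathbf 0$. But the values returned by $H$ are independent of the input formula $\phi$ and can be sampled by the algorithm itself, so $R^{\mathbf 0,H}(\phi)$ collapses to a PPT decider for $\lang{TQBF}$, contradicting $\lang{TQBF}\notin\class{BPP}$, which follows from Assumption~\ref{asn:tqbf} since $\class{BPP}\subseteq\class{BQP}$.

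For (ii), on worst-case input $\phi$ the generator $G_2^Q$ first runs the non-adaptive random self-reduction $G$ of $\lang{TQBF}$ from Assumption~\ref{asn:tqbf} to obtain $\phi_1,\dots,\phi_k$, each distributed according to $D$; then for each $j\in[k]$ it samples $i_j\gets[N]$ uniformly, queries $H$ \emph{classically} at $i_j$ to obtain $y_{i_j}$, and runs Shor's algorithm to compute a preimage $z_{i_j}$ with $f(z_{i_j})=y_{i_j}$. The resulting queries $(i_j,z_{i_j},\phi_j)$ are distributed according to $\hat D_2$ (over the choice of $i_j$ together with the random $z_i$ built into $H$), so $A$ answers each with the correct bit $[\phi_j\in\lang{TQBF}]$, and $R_2^Q$ feeds these answers to the decision procedure $R$ of the random self-reduction, which outputs $[\phi\in\lang{TQBF}]=L_2^H(\phi)$. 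All steps are poly-time and the queries to both $H$ and $A$ are classical and non-adaptive, which is exactly the strengthening over Construction~1 (no quantum oracle access is needed).

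The main obstacle — and the whole reason $\hat L_2^H$ is defined to encode $N=2^n$ independent inversion instances rather than one — is making the classical lower bound robust to \emph{non-uniform} reductions: with a single hard instance $y$, a non-uniform PPT reduction could simply carry a valid preimage $z$ as advice and then use $A$ freely, so the impossibility would fail. Spreading superpolynomially many independent instances across $H$ while keeping the reduction poly-time defeats this, which is the content of the Remark preceding the theorem and the step I expect to need the most care in formalizing (a clean union-bound argument over the reduction's $\poly$-many $H$-queries). A secondary technical wrinkle specific to the $f(p,q)=pq$ instantiation is that $f$ has two preimages, so one should fix a canonical order (say $p<q$) in the definitions of $\hat L_2^H$ and $\hat D_2$ so that Shor's algorithm recovers \emph{the} intended $z_{i_j}$ and the query distribution matches $\hat D_2$ exactly; with that convention in place the argument of Theorem~\ref{thm:sepsimon} transfers unchanged. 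I would also confirm at the outset that ``classical reduction'' here means a general (possibly adaptive, possibly non-uniform) PPT Turing reduction, so that ruling it out is the strong statement intended.
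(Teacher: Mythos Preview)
Your proposal is correct and follows essentially the same approach as the paper, which simply says to repeat the argument of Theorem~\ref{thm:sepsimon} with Shor's algorithm in place of Simon's and classical queries to $H$. You have fleshed out more detail than the paper's one-line sketch (the union-bound over the reduction's $H$-queries, the non-uniformity issue already flagged in the Remark, and the canonical-preimage convention for $f(p,q)=pq$), but the underlying dichotomy---either extract an inverter for $f$ or collapse to a \class{BPP} decider for \lang{TQBF}---is exactly the paper's intended argument.
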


\subsection{Discussion of the definitions: special cases}
\label{sec:special}

\subsubsection{Classical queries.} If $G$ outputs classical queries, then
we can immediately derive a negative result, analogous to a classical
result by Brassard~\cite{Brassard79}. In the following,
$L<_{R,G} \cal{P}'$ denotes a non-adaptive quantum reduction from $L$
to $\cal{P}'$, where $G$ generates oracle queries and $R$ decides $L$
according to the oracle's responses. We will add more constraints to
the reductions accordingly in this section.

\begin{theorem}~\label{thm:qrcq} If there is a non-adaptive quantum
  reduction $L \red{R,G} \iowp$ where $G$ only issues classical queries, then $L \in$ \class{QIP$(2)$} with classical interactions.
 \label{thm:w2owpc}  
\end{theorem}

\begin{proof} 
  The protocol is as follows: The verifier first applies $G$ to
  generate queries and sends these queries to the prover.  Then, the
  prover simulates the oracle for $\iowp$ and sends the responses
  back.  Finally, the verifier checks if the responses are correct by
  computing the permutation.  If the prover is not cheating, the
  verifier applies $R$ and accepts if the reduction accepts.  Otherwise, the
  verifier rejects.  Note that the prover can only give the correct
  answer for $\iowp$.  Otherwise, the verifier rejects.
\end{proof}

\subsubsection{EPR queries.} As another special case, we consider that
$w_{x,i}(q)$ is the identity function in Definition~\ref{def:qff}.
Namely $G$ generates $k$ identical copies of
$\ket{\Psi}^{\otimes m} = \sum_{q} \ket{q}\ket{q}$, where
$\Psi = \frac{1}{\sqrt{2}} (\ket{00} + \ket{11})$ is an {EPR}-pair.
Then half of the EPR-pairs are submitted as queries to the solver of
the average-case problem.  Note the reduced density of each query is
totally mixed, and this looks a natural generalization of classical
uniform queries.  Nonetheless, we show that this is too strong a
constraint that trivializes the study of worst-to-average reductions,
as far as \class{OWP} is concerned.

\begin{proposition} If there is a reduction $L \red{R,G} \iowp$ where $G$
  issues EPR-queries, then $L \in$ \class{BQP}.
 \label{prop:w2owpepr}  
\end{proposition}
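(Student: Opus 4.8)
The plan is to observe that when $G$ issues EPR-queries, the \emph{honest} oracle responses are states that a polynomial-time quantum algorithm can prepare entirely on its own, so the verifier of the natural protocol can play the role of the prover and decide $L$ in \class{BQP} with no interaction at all.

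First I would write down the honest response explicitly. With $w_{x,i}$ the identity and the uniform distribution, each query register in Definition~\ref{def:qff} is $\ket{Q_{x,i}} = \frac{1}{\sqrt{2^m}}\sum_{q}\ket{q,0}_M\ket{q}_V$, and the honest $\iowp$-oracle (for the one-way permutation $f$) maps it to $\ket{Q^H_{x,i}} = \frac{1}{\sqrt{2^m}}\sum_q \ket{q, f^{-1}(q)}_M\ket{q}_V$. The key point is that, since $f$ is a bijection on $\{0,1\}^m$, re-indexing the sum by $q = f(p)$ gives $\ket{Q^H_{x,i}} = \frac{1}{\sqrt{2^m}}\sum_p \ket{f(p), p}_M\ket{f(p)}_V$: every term is now built from a \emph{forward} evaluation of $f$, which is polynomial-time computable by hypothesis. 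Hence $\ket{Q^H_{x,i}}$ can be prepared directly — apply Hadamards to get $\frac{1}{\sqrt{2^m}}\sum_p\ket{p}$, compute $f(p)$ into a second register, and copy that value by CNOTs into the $V$ register — and relabeling recovers exactly $\ket{Q^H_{x,i}}$ (which, incidentally, does not even depend on $x$ or $i$ here, though that is immaterial).

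The \class{BQP} algorithm for $L$ then, on input $x$, prepares $\ket{Q^H_{x,1}}\otimes\cdots\otimes\ket{Q^H_{x,k}}$ as above together with $\ket{x}$ and the ancillas, applies the polynomial-size unitary $R$, and measures the first qubit. Since $f^{-1}$ is trivially $\delta$-close to $\iowp$, the reduction guarantees that the outcome equals $\mathcal{P}(x)=L(x)$ with probability $p\ge 2/3$; standard repetition and majority vote drive the error below $\negl(n)$. All steps run in polynomial time, so $L\in\class{BQP}$. I do not expect a real obstacle: the entire content is the substitution $q=f(p)$, which turns the inverse $f^{-1}$ in the honest answer into a forward evaluation of $f$. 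The only care needed is bookkeeping of the ancilla registers so the state fed to $R$ is precisely the one Definition~\ref{def:qff} prescribes; once the layout matches, correctness is immediate. It is also worth noting where this argument fails for general $w_{x,i}$ or a non-uniform distribution — there the honest response need not be preparable without inverting $f$, which is exactly why the EPR restriction trivializes the model.
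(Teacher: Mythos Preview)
Your proposal is correct and follows essentially the same approach as the paper: both hinge on the substitution $q=f(p)$ (the paper phrases it as the uniform superposition being invariant under a permutation), which turns the honest oracle response into a state preparable with only forward evaluations of $f$, after which one applies $R$ directly. The paper's preparation is Hadamard, compute $f$, copy, then a swap of registers---exactly the circuit you describe up to register ordering.
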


\begin{proof} Observe that a uniform superposition over a set is
  \emph{invariant} under an arbitrary permutation.  This means that a
  quantum reduction could create the correct state that otherwise
  would require invoking an inverting oracle $I$ of the \class{OWP}
  $f$.  Namely applying $I$ on an EPR query gives us
  \begin{eqnarray*}
    \sum_q\ket{q,q} \overset{I}{\mapsto} \sum_{q}
    \ket{q,q,f^{-1}(q)} \, .    
  \end{eqnarray*}
  This can be created without help of $I$ as follows:
  \begin{align*}
    \ket{0,0,0} \mapsto \sum_q\ket{q,0,0} & \overset{f}{\mapsto} \sum_{q}
      \ket{q,f(q),f(q)} = \sum_{q'} \ket{f^{-1}(q'),q',q'}\\
    &\overset{\mathrm{SWAP}_{1,3}}\mapsto \sum_{q'} \ket{q',q',f^{-1}(q')}\, .   \end{align*}
\end{proof}

\begin{remark} Consequently, it is \emph{necessary} that the query
  states maintain more sophisticated correlations between the query
  register and the work register of the reduction.  We note the same
  phenomenon also occur classically.  Namely, although the marginal
  distribution of each query is \emph{uniformly} random, it is
  important that the internal state of the reduction should not be
  independent of the queries.  Otherwise, existence of such a reduction
  will trivialize the language under consideration to fall in
  \class{BPP}.
\end{remark}

\section{On uniform locally quantum reductions}
\label{sec:ulqr}
We prove our main result in this section. We first consider
quantum reductions that make \emph{one-query} only.  It demonstrates
the main idea of our general result with a cleaner analysis.
Section~\ref{sec:ff_reduction} will handle multiple non-adaptive
queries.
\subsection{Uniform one-query locally quantum reductions}
\label{sec:one_reduction}

Let $f$ be a one-way permutation on $\bool^n$, and let
$U_f: \ket{x,y} \mapsto \ket{x,y\oplus f(x)}$ be a unitary quantum
circuit computing it. Note that $U_f\ket{f^{-1}(x),0} = \ket{f^{-1}(x),x}$.  A uniform one-query locally quantum reduction for $L$ works as follows:

\begin{eqnarray}
  \ket{x, 0}\ket{0} &\stackrel{\generator}{\longrightarrow}&
                                                      \frac{1}{\sqrt{2^{m}}} \sum_{q\in\mathbb{Z}_2^m} \ket{q,0}\ket{w_{x}(q)}\label{eq:o_qlrr_1}\\
             &\overset{O_{f^{-1}}}{\longrightarrow}& \frac{1}{\sqrt{2^{m}}} \sum_{q\in \mathbb{Z}_2^m} \ket{q,f^{-1}(q)}\ket{w_{x}(q)}
             \stackrel{\reduction}{\longrightarrow}
             a_0|0\rangle|\psi_{x,0}\rangle+a_1|1\rangle|\psi_{x,1}\rangle,\label{eq:o_qlrr_2}
\end{eqnarray}
where $|a_1|^2\geq 1-\epsilon$ if $x\in L$ and $|a_1|^2\leq \epsilon$
if $x\notin L$ is the accepting probability of the reduction.

\begin{theorem}~\label{thm:one_main} Suppose there exists a one-query
  uniform locally quantum reduction with exponentially small error
  $\epsilon$ from a worst-case decision problem $L$ to the task of
  inverting a polynomial-time computable permutation.  Then there
  exists a $QIP($2$)$ protocol with completeness $1-\epsilon/2$ and
  soundness $1/2+2\sqrt{\epsilon}$ for $\overline{L}$
\end{theorem}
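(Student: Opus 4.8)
The plan is to build a two-message quantum interactive proof for $\overline{L}$ in which the verifier delegates the ``inversion'' part of the reduction to the prover, while protecting against a cheating prover by hiding the genuine query state inside a superposition with a carefully chosen trap state. Concretely, on input $x$ the verifier runs $\generator$ to produce $\frac{1}{\sqrt{2^m}}\sum_q \ket{q,0}\ket{w_x(q)}$, but rather than sending the query register $M$ directly, it first prepares a control qubit in $\frac{1}{\sqrt 2}(\ket 0+\ket 1)$ and, conditioned on $\ket 0$, keeps the reduction's query state, while conditioned on $\ket 1$ it swaps in a trap register holding a fixed, publicly known state (for instance a uniform superposition $\frac{1}{\sqrt{2^m}}\sum_q\ket q$ with the work register set to a canonical value). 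It then sends register $M$ to the prover. An honest prover applies $O_{f^{-1}}$ coordinatewise, i.e.\ maps $\ket q\mapsto\ket{q,f^{-1}(q)}$ on every $q$ in the superposition, and returns the register. Upon receiving it, the verifier uncomputes the control-conditioned swap, checks the trap branch by applying $U_f$ and verifying the second component equals $q$ (rejecting on mismatch), and on the control-$\ket 0$ branch runs $\reduction$ and outputs the \emph{opposite} bit, so that $x\in\overline L$ is accepted with probability $\approx |a_0|^2\ge 1-\epsilon$.

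The key steps, in order, are: (1) write down the verifier's state preparation as a unitary $\VT$ acting on the $\generator$-output together with a fresh control qubit and trap register, so that the message register $M$ sent to the prover has reduced density matrix close to a fixed state independent of whether $x\in L$ — this is what forces a cheating prover to act ``the same way'' on both branches; (2) argue completeness: the honest prover's action commutes with the control, the trap test passes with certainty, and the $\reduction$ branch reproduces the reduction's output, giving completeness $1-\epsilon/2$ after accounting for the factor-$1/2$ from the control qubit; (3) argue soundness: model an arbitrary cheating prover as a unitary $\malp$ on $M$ plus an ancilla, and show that if $\malp$ causes $\reduction$ to accept (i.e.\ the verifier to \emph{reject} $x\notin\overline L$, equivalently $x\in L$) with probability noticeably above the honest value, then $\malp$ must perturb the trap branch, which the trap test catches. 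The quantitative heart is a ``gentle-measurement''/triangle-inequality argument: because the $\ket 0$ and $\ket 1$ branches of the message register are (nearly) identical states, $\malp$'s deviation from the honest inversion on the query branch is matched by an equal deviation on the trap branch, and the latter deviation lower-bounds the trap-rejection probability; balancing these yields soundness $1/2+2\sqrt\epsilon$.

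The main obstacle I expect is step (3): quantifying precisely how a deviation of $\malp$ from the ideal isometry $\ket q\mapsto\ket{q,f^{-1}(q)}$ on the query branch translates into a detectable deviation on the trap branch, and doing so with only the \emph{uniform} marginal on $M$ to work with (the prover sees no information about $x$, but the global state is entangled with the verifier's work register $V$ through $w_x(\cdot)$). The delicate point is that $\malp$ can correlate its behavior with the work register only through whatever leaks into $M$; one has to show that the indistinguishability of the two control branches at the level of $M$ implies that any operator $\malp$ applied to $M$ affects the two branches almost identically, so the trap test's rejection probability is within $O(\sqrt\epsilon)$ of the amount by which $\reduction$'s acceptance is pushed in the wrong direction. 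I would set this up via the fidelity between the actual post-prover state and the honest one, use that $F$ is non-decreasing under the partial trace onto $M$, and convert fidelities to trap-rejection probabilities with the Fuchs–van de Graaf inequalities, finally optimizing the threshold at which the verifier declares ``cheating'' to get the stated $1/2+2\sqrt\epsilon$ bound.
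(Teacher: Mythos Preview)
Your high-level strategy matches the paper's: prepare a control-qubit superposition of the genuine query branch and a trap branch, send only the message register $M$ to the prover, and on return either run $\reduction$ or test the trap. Two points, however, deserve correction or sharpening.

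\textbf{The trap state must be a purification of the same $M$-marginal as the query.} Your suggested trap, a uniform superposition $\frac{1}{\sqrt{2^m}}\sum_q\ket{q}$ on $M$ with the work register set to a fixed canonical value, is a \emph{pure} state on $M$. The query state $\frac{1}{\sqrt{2^m}}\sum_q\ket{q,0}_M\ket{w_x(q)}_V$, by contrast, generally leaves $M$ in a \emph{mixed} state because of the entanglement with $V$ through $w_x$. A prover can perfectly distinguish a pure uniform superposition from the maximally mixed state and would then behave honestly on the trap while cheating on the query, destroying soundness. The paper fixes this by copying $q$ into the verifier's private register in \emph{both} branches: the query becomes $\ket{Q}=\frac{1}{\sqrt{2^m}}\sum_q\ket{q,0}_M\ket{w_x(q),q}_V$ and the trap becomes $\ket{T}=\frac{1}{\sqrt{2^m}}\sum_q\ket{q,0}_M\ket{0,q}_V$. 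Now $\tr_V\kera{Q}=\tr_V\kera{T}$ exactly, so the two branches are genuinely indistinguishable on $M$. You stated the right requirement (same reduced state on $M$) but your concrete construction does not meet it.

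\textbf{The soundness argument is cleaner and tighter than Fuchs--van de Graaf.} Once the trap is chosen correctly, $\ket{Q^H}$ and $\ket{T^H}$ are two purifications of the \emph{same} mixed state on $M$, and the prover's action is a channel $\Psi_M$ on $M$ alone. The paper observes (their Lemma~\ref{lemma:compm}) that the overlap of the channel output with the original purification is the entanglement fidelity of $\Psi_M$, hence independent of which purification is used: $\bra{T^H}\rho_1\ket{T^H}=\bra{Q^H}\rho_0\ket{Q^H}$ \emph{exactly}. This collapses the soundness to bounding $\tr(\projr\rho_0)+\bra{Q^H}\rho_0\ket{Q^H}$ for a single state $\rho_0$, where $\bra{Q^H}\projr\ket{Q^H}\le\veps$. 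A short geometric lemma (their Lemma~\ref{lemma:maxproj}) then shows this sum is at most $1+\sqrt{\veps}$, achieved when $\rho_0$ bisects $\ket{Q^H}$ and its projection onto the acceptance subspace. Your proposed route via fidelity monotonicity and Fuchs--van de Graaf would give only $\tr(\projr\rho_0)\le\veps+\sqrt{1-p}$ with $p=\bra{Q^H}\rho_0\ket{Q^H}$, which maximizes to roughly $5/4$ rather than $1+\sqrt{\veps}$; that still yields a constant completeness--soundness gap (so the corollaries survive), but not the stated $\tfrac12+2\sqrt{\veps}$. There is also no threshold to optimize in the protocol: the verifier simply performs each of the two tests with probability $\tfrac12$.
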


\subsubsection{The protocol for $\overline{L}$.}
We are given the uniform one-query locally quantum reduction $(G,R)$.
We enlarge the size of register $V$ and define a unitary $\cnot$ which
performs a CNOT on the first register of $M$ into the second register
of $V$:
$\ket{q,x}_M\ket{y,z}_V \stackrel{\cnot}{\longrightarrow}
\ket{q,x}_M\ket{y,z\oplus q}_V$.  The whole protocol takes place in
the space
$\mathcal{H}_P\otimes\mathcal{H}_M\otimes\mathcal{H}_V\otimes\mathcal{H}_{\Pi}$
where $P$ is the private register of the prover, $M$ is the register
exchanged between the prover and the verifier, $V$ and $\Pi$ are
registers which are private to the verifier.

We describe some states that are crucial in the protocol.

\begin{itemize}
\item The verifier prepares the state $|S\rangle_{MV\Pi} = \frac{1}{\sqrt{2}}(|Q\rangle_{MV}|0\rangle_{\Pi}+|T\rangle_{MV}|1\rangle_{\Pi})$, where
    \begin{eqnarray}
    |Q\rangle_{MV} = \frac{1}{\sqrt{2^n}}\sum_{q\in \mathbb{Z}_2^m} |q,0\rangle_M|w_x(q),q\rangle_V\label{eq:Q}
    \end{eqnarray}
    without the extra copy of $q$ in the register $V$ is the query state generated from $G$ as in Equation~\ref{eq:o_qlrr_1}, and
    \begin{eqnarray}
        |T\rangle_{MV} = \frac{1}{\sqrt{2^n}}\sum_{q\in \mathbb{Z}_2^m} |q,0\rangle_M|0,q\rangle_V\label{eq:T}
    \end{eqnarray}
    is the trap state, which will be used to catch a cheating prover.  
    \item The honest prover replies $|S^H\rangle_{MV\Pi}=\frac{1}{\sqrt{2}}(|Q^H\rangle_{MV}|0\rangle_{\Pi}+|T^H\rangle_{MV}|1\rangle_{\Pi})$, where
    \begin{eqnarray}
        |Q^H\rangle_{MV} &=& \frac{1}{\sqrt{2^n}}\sum_{q\in \mathbb{Z}_2^m}
        |q,f^{-1}(q)\rangle_M|w_x(q),q\rangle_V\label{eq:q_response}\\
      |T^H\rangle_{MV} &=& \frac{1}{\sqrt{2^n}}\sum_{q\in \mathbb{Z}_2^m}
      |q,f^{-1}(q)\rangle_M|0,q\rangle_V.\label{eq:t_response}
    \end{eqnarray}
    The state $|Q^H\rangle$ without the extra copy of $q$ in register $V$
    is the state the actual reduction $R$ gets after querying the
    oracle as in Equation~\ref{eq:o_qlrr_2}.  The state $|T^H\rangle$ can be
    mapped to $|0\rangle_{MV}$ efficiently as shown below.  This gives
    the verifier an efficient way to check if $|T^H\rangle$ is changed significantly.
\end{itemize}

We do the following to map $|T^H\rangle_{MV}$ back to $|0\rangle_{MV}$
efficiently
\begin{eqnarray*}
  \sum_{q\in\mathbb{Z}_2^m}|q,f^{-1}(q)\rangle_M|0,q\rangle_V &\xrightarrow{C}& 
                                                                                \sum_{q}|q,f^{-1}(q)\rangle_M|0,q\oplus
                                                                                q\rangle_V\\
                                                              &\xrightarrow{U_{f}}&\sum_{q}|q\oplus
                                                                                    f(f^{-1}(q)),f^{-1}(q)\rangle_M|0,0\rangle_V
                                                              \xrightarrow{F} |0,0\rangle_{M}|0,0\rangle_V\label{eq:T_0_3}.
\end{eqnarray*}
Here $U_f$ is applied from the second register of $M$ into the first,
and $f$ is applied to the second register of $M$.  The last two steps use
the property that $f$ can be evaluated efficiently and $f$ is a permutation.

Given the reduction $(G,R)$, we can get a \class{QIP($2$)} protocol
for $L$ by answering the same as $R$ and a protocol for $\overline{L}$
by flipping $R$'s answer.  The \class{QIP}($2$) protocol for $\overline{L}$ is
described in Protocol~\ref{fig:protocol1query}.

\floatname{algorithm}{Protocol}
\begin{algorithm}[h]
  \begin{mdframed}[style=figstyle,innerleftmargin=10pt,innerrightmargin=10pt]
The protocol takes place in the space
      $\mathcal{H}_P \otimes \mathcal{H}_{M} \otimes \mathcal{H}_V
      \otimes \mathcal{H}_{\Pi}$ 
      where $P$ is the private register of the prover, $M$ is the
      register exchanged between the prover and the verifier, and
      $V$ and $\Pi$ are registers which are private to the verifier.
    \begin{enumerate}
    \item \emph{The verifier's query.}  The verifier
      prepares $\ket{S}_{MV\Pi} := \frac{1}{\sqrt{2}}(|Q\rangle_{MV}|0\rangle_{\Pi}+|T\rangle_{MV}|1\rangle_{\Pi}).$
    
    The message register $M$ is sent to the prover, and the verifier keeps $V$ and $\Pi$.
    This is generated by conditioning on the register $\Pi$,
    which is initialized in $\ket{+}$.  If $\Pi = 0$, $G$ is applied and then
    $q$ is copied to the second part of the verifier's internal register $V$,
    which produces $\ket{Q}_{MV}$.  If $\Pi = 1$, compute the
    Fourier transform followed by CNOT to create $\ket{T}_{MV}$, a trap
    state we use to catch a cheating prover.
    
  \item \emph{The prover's response.}  The prover applies some unitary $U_{PM}$ on register $M$ and its private register $P$ and sends the message register back to the verifier.  
  \item \emph{The verifier's verification.} The verifier applies
    $C$ to erase $q$ in $V$.  The verifier then measures $\Pi$ to obtain
    $b \in \{0,1\}$, and does the following:
    \begin{itemize}
    \item \text{(Computation verification)} If $b=0$, apply $R$ on $MV$
      and measure the output qubit.  Accept if the outcome is $0$.
    \item \text{(Trap verification)} If $b=1$, apply $V_T$ on $MV$
      and measure $MV$.  Accept if the outcome is all $0$, (i.e., if the reduction rejects).  
    \end{itemize}
  \end{enumerate}
  \caption{\class{QIP}($2$) protocol for $\overline{L}$ using a one-query locally quantum reduction.}
  \label{fig:protocol1query}
  \end{mdframed}
\end{algorithm}

\subsubsection{Proof of
  Theorem~\ref{thm:one_main}.}~\label{sec:proof_one_main} 
We first prove a few useful lemmas.  Lemma~\ref{lemma:compm} is an
immediate consequence of the fact two purifications of the same state
are related by an isometry.  In fact, this exactly explains why the
entanglement fidelity is well defined~\cite{SCh96}.

\begin{lemma} \label{lemma:compm}
Let $\rho_A$ be a state in some Hilbert space.  Let
  $\ket{\phi}_{AB}$ and $\ket{\psi}_{AB}$ be two purifications of
  $\rho_A$, i.e.,
  $\tr_B(\kera{\phi}_{AB})=\tr_B(\kera{\psi}_{AB}) = \rho_A$.  Let
  $\Psi_A: \mathcal{H}_A\rightarrow \mathcal{H}_A$ be a quantum channel.  
  Let
  $\rho_{AB}: = (\Psi_A\otimes I_B)(\kera{\phi}_{AB})$ and
  $\sigma_{AB}:= (\Psi_A\otimes I_B)(\kera{\psi}_{AB})$, where
  the notation $\Psi_A\otimes I_B$ means that the channel is only applied on the space $\mathcal{H}_A$ and space $\mathcal{H}_B$ is not changed.  
  Then
  $\langle \phi| \rho_{AB} |\phi\rangle = \langle \psi | \sigma_{AB}
  |\psi\rangle$.
\end{lemma}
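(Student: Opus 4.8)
The plan is to use the unitary equivalence of purifications. Since $\ket{\phi}_{AB}$ and $\ket{\psi}_{AB}$ are both purifications of the same state $\rho_A$ living on the same space $\mathcal{H}_A\otimes\mathcal{H}_B$, there is a unitary $U_B$ acting on $\mathcal{H}_B$ alone with $\ket{\psi}_{AB} = (I_A\otimes U_B)\ket{\phi}_{AB}$. I would recall the standard justification: writing $\rho_A=\sum_i p_i\kera{i}$ in its eigenbasis, every purification has the form $\sum_i\sqrt{p_i}\,\ket{i}_A\ket{e_i}_B$ for some orthonormal family $\{\ket{e_i}\}$ in $\mathcal{H}_B$, and any two such families are related by a unitary on $\mathcal{H}_B$. (If one wants to allow purifying registers of different dimensions, the same argument works with an isometry $W:\mathcal{H}_B\to\mathcal{H}_{B'}$; embedding both systems into a common space reduces to the unitary case.)

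Next I would push $U_B$ through the channel. Because $\Psi_A$ acts only on $\mathcal{H}_A$ while $U_B$ acts only on $\mathcal{H}_B$, the maps $\Psi_A\otimes I_B$ and $X\mapsto(I_A\otimes U_B)\,X\,(I_A\otimes U_B^\dagger)$ commute. Hence
\[
\sigma_{AB} = (\Psi_A\otimes I_B)\bigl((I_A\otimes U_B)\,\kera{\phi}_{AB}\,(I_A\otimes U_B^\dagger)\bigr) = (I_A\otimes U_B)\,\rho_{AB}\,(I_A\otimes U_B^\dagger).
\]
Substituting $\ket{\psi}_{AB}=(I_A\otimes U_B)\ket{\phi}_{AB}$ and using $U_B^\dagger U_B=I_B$ on both sides then gives
\[
\langle\psi|\,\sigma_{AB}\,|\psi\rangle = \langle\phi|(I_A\otimes U_B^\dagger)(I_A\otimes U_B)\,\rho_{AB}\,(I_A\otimes U_B^\dagger)(I_A\otimes U_B)|\phi\rangle = \langle\phi|\,\rho_{AB}\,|\phi\rangle,
\]
which is the claimed identity; this is exactly the statement that the entanglement fidelity $F(\rho_A,\Psi_A)$ does not depend on the chosen purification.

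I do not expect a real obstacle here: the only points requiring care are getting the direction of the intertwining unitary (equivalently, isometry) between the two purifications right and invoking its existence, plus the elementary observation that operations on disjoint tensor factors commute. Every remaining step is a one-line manipulation.
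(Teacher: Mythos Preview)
Your proposal is correct and follows essentially the same approach as the paper: both use the unitary equivalence of purifications on $B$ and then push that unitary through the $A$-only channel to cancel it on both sides. The only cosmetic difference is that the paper makes the commutation step explicit by writing $\Psi_A$ in Kraus form $\{E_A^\ell\}$ and invoking $(A\otimes B)(C\otimes D)=(AC)\otimes(BD)$, whereas you state directly that maps on disjoint tensor factors commute; this is the same argument at a slightly different level of detail.
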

\begin{proof}  
Observe that (e.g., by Schmidt decomposition) there is
  a unitary $U_B$ operating only on $B$ such that
  $I_A\otimes U_B \ket{\psi}_{AB} = \ket{\phi}_{AB}$.  Then
  \begin{eqnarray}
    &&\bra{\phi} \rho_{AB}\ket{\phi}\nonumber=\bra{\phi} (\Psi_A\otimes I_B) ((\ket{\phi}\bra{\phi})_{AB}) \ket{\phi}\nonumber\\
    &=& \bra{\psi}(I_A\otimes U_B^{\dagger}) (\Psi_A\otimes I_B)(I_A\otimes U_B(\ket{\psi}\bra{\psi})_{AB}I_A\otimes U_B^{\dagger})
        (I_A\otimes U_B)\ket{\psi}\nonumber \\
    &=& \sum_{\ell} \bra{\psi}(I_A\otimes U_B^{\dagger}) (E_A^{\ell}\otimes I_B)(I_A\otimes U_B(\ket{\psi}\bra{\psi})_{AB}I_A\otimes U_B^{\dagger})\nonumber\\
    &&(E_A^{\ell\dagger}\otimes I_B)
        (I_A\otimes U_B)\ket{\psi}    \label{eq:ops}\\ 
    &=& \sum_{\ell} \bra{\psi}(E_A^{\ell}\otimes I_B)((\ket{\psi}\bra{\psi})_{AB})(E_A^{\ell\dagger}\otimes I_B)
        \ket{\psi}    \label{eq:ops2}\\
    &=& \bra{\psi} (\Psi_A\otimes I_B)((\ket{\psi}\bra{\psi})_{AB}) \ket{\psi}= \bra{\psi} \sigma_{AB}\ket{\psi}.\nonumber
  \end{eqnarray}  
The operators $\{E_A^{\ell}\}$ in Equation~\ref{eq:ops} are the operation elements of the channel $\Psi_{A}$, where $(\Psi_{A}\otimes  I_B)((\ket{\phi}\bra{\phi})_{AB}) = \sum_{\ell} (E_{A}^{\ell}\otimes I_B)((\ket{\phi}\bra{\phi})_{AB})(E_{A}^{\ell\dagger}\otimes I_B)$ and $(\Psi_{A}\otimes  I_B)((\ket{\psi}\bra{\psi})_{AB}) = \sum_{\ell} (E_{A}^{\ell}\otimes I_B)((\ket{\psi}\bra{\psi})_{AB})(E_{A}^{\ell\dagger}\otimes I_B)$.   Equation~\ref{eq:ops2} is correct due to the property that $(A\otimes B)(C\otimes D)=(AC)\otimes(BD)$.  
\end{proof}

Without loss of generality, we
can always represent the prover's operator $U_{PM}$ as $U'_{PM} O_{f^{-1}}$ where $U'_{PM}$ is an arbitrary unitary
the cheating prover may apply.
Let 
\begin{align*}
&\sigma^{U'}_Q=\Tr_P(U'_{PM}\otimes I_V(|0\rangle\langle 0|\otimes |Q^H\rangle\langle Q^H|)U'^{\dagger}_{PM}\otimes I_V)\mbox{, and}\\    
&\sigma^{U'}_T=\Tr_P(U'_{PM}\otimes I_V(| 0\rangle\langle  0|\otimes |T^H\rangle\langle T^H|)U'^{\dagger}_{PM}\otimes I_V). 
\end{align*}
The following claim shows that for any unitaries the prover applies, the change on $|T^H\rangle$ is as much as the change on $|Q^H\rangle$.

\begin{lemma}\label{claim:u_indistinguishable}
For an arbitrary $U'_{PM}$, let $|Q^H\rangle_{MV}$ and $|T^{H}\rangle_{MV}$ be as defined in Equation~\ref{eq:q_response} and Equation~\ref{eq:t_response}, and let $\sigma^{U'}_Q$ and $\sigma^{U'}_T$ be as above.  Then $\langle Q^H|\sigma_Q^{U'}|Q^H\rangle = \langle T^H|\sigma_T^{U'}|T^H\rangle$.  
\end{lemma}
\begin{proof}
  We represent the prover's behavior $U'_{PM}$ on the state $\ket{Q^H}$ and $\ket{T^H}$ as a noisy channel $\Psi_M^{U'}$ operating on register $M$, which is formally defined as follows: 
  \begin{eqnarray}
    \mbox{For all }\rho\in \mathcal{H}_P\otimes\mathcal{H}_M\otimes\mathcal{H}_V,\mbox{ }(\Psi_M^{U'}\otimes I_V)(\rho) := \Tr_P((U'_{PM}\otimes I_V)\rho(U'^{\dagger}_{PM}\otimes I_V)).  \nonumber
  \end{eqnarray}
  Therefore, 
  \begin{align*}
    &(\Psi_{M}^{U'} \otimes I_V)(\ket{Q^H}\bra{Q^H}) = \sigma^{U'}_Q
    &(\Psi_{M}^{U'}\otimes I_V)(\ket{T^H}\bra{T^H}) = \sigma^{U'}_T.  
  \end{align*}
  $|T^H\rangle$ and $|Q^H\rangle$ are actually two purifications of a
  mixed state on register $M$ since $\Tr_V(\ket{Q^H}\bra{Q^H})=
  \Tr_V(\ket{T^H}\bra{T^H})$, and by Lemma~\ref{lemma:compm}, we can conclude that 
  \[
   \langle Q^H|\sigma_Q^{U'}|Q^H\rangle = \langle T^H|\sigma_T^{U'}|T^H\rangle.  
   \]
\end{proof}

Given a state $\ket{\phi}$ and a projector $\Pi_S$, Lemma~\ref{lemma:maxproj} shows that the state $\rho$ which maximizes the quantity $\Tr(\Pi_S \rho) + \bra{\phi} \rho \ket{\phi}$ is the bisector of $\ket{\phi}$ and its projection on $\Pi_S$.  
\begin{lemma}\label{lemma:maxproj}
Let $S \subseteq \hilbert$ be a subspace and $\Pi_S$ be
  the projection operator on $S$.  Let $\ket{\phi}$ be a state such
  that $\bra{\phi} \Pi_S \ket{\phi} = \sin^2 \theta$, for some
  $\theta \in [0,\pi/2]$.  Then for any density operator
  $\rho \in D(\hilbert)$,
  $\Tr(\Pi_S \rho) + \bra{\phi} \rho \ket{\phi} \leq 1 + \sin \theta$.
\end{lemma}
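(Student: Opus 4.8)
The plan is to first recognize that the quantity to be bounded is \emph{linear} in $\rho$: indeed $\Tr(\Pi_S \rho) + \bra{\phi}\rho\ket{\phi} = \Tr\big((\Pi_S + \kera{\phi})\,\rho\big)$. Since the maximum of $\Tr(A\rho)$ over $\rho \in D(\hilbert)$ for a Hermitian $A$ is its largest eigenvalue (attained at the projector onto a top eigenvector), the lemma reduces to the purely linear-algebraic statement that the operator $A := \Pi_S + \kera{\phi}$, which is a sum of two projectors, has operator norm at most $1 + \sin\theta$.

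To bound $\|A\|_\infty$, I would exploit that only a two-dimensional ``slice'' of $\hilbert$ through $\ket{\phi}$ matters. Using $\bra{\phi}\Pi_S\ket{\phi} = \sin^2\theta$, write $\ket{\phi} = \sin\theta\,\ket{\phi_S} + \cos\theta\,\ket{\phi_{S^{\perp}}}$ with unit vectors $\ket{\phi_S}\in S$ and $\ket{\phi_{S^\perp}}\in S^{\perp}$ (the degenerate cases $\theta = 0$ and $\theta = \pi/2$ are immediate and handled separately). Set $W = \mathrm{span}\{\ket{\phi_S}, \ket{\phi_{S^\perp}}\}$, so $\ket{\phi}\in W$ and $\dim W \le 2$. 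The key check is that $A$ is block diagonal with respect to $\hilbert = W \oplus W^{\perp}$: since $\ket{\phi_S}$ and $\ket{\phi_{S^\perp}}$ are eigenvectors of $\Pi_S$ lying in $W$, the projector $\Pi_S$ maps $W$ into $\mathrm{span}\{\ket{\phi_S}\}\subseteq W$ and maps $W^\perp$ into $W^\perp$, while $\kera{\phi}$ annihilates $W^\perp$ and has range inside $W$.

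On $W^{\perp}$ the operator $A$ acts as a sub-projector and so has norm at most $1$. On $W$ it is the sum of the rank-one projector onto $\ket{\phi_S}$ and the rank-one projector onto $\ket{\phi}$, and the relevant overlap is $\abs{\langle \phi_S|\phi\rangle} = \sin\theta$; a routine $2\times 2$ computation (the matrix has trace $2$ and determinant $1 - \sin^2\theta = \cos^2\theta$) gives eigenvalues $1 \pm \sqrt{1-\cos^2\theta} = 1 \pm \sin\theta$, where we use $\sin\theta \ge 0$ on $[0,\pi/2]$. Hence $\|A\|_\infty = \max\{1,\, 1+\sin\theta\} = 1 + \sin\theta$, which proves the claim. (An alternative route avoiding Jordan-type block decomposition is the direct variational computation: for a unit vector $\ket{\psi}$ with $\|\Pi_S\ket{\psi}\| = \sin\beta$ one gets $\bra{\psi}\Pi_S\ket{\psi} + \abs{\langle\phi|\psi\rangle}^2 \le \sin^2\beta + \cos^2(\beta-\theta) = 1 + \sin\theta\sin(2\beta-\theta) \le 1 + \sin\theta$, then pass from pure to mixed $\rho$ by linearity.) The main obstacle is purely that $\Pi_S$ may have arbitrary rank; the work is in confirming that restricting attention to the two-dimensional subspace $W$ loses nothing, i.e., that $W^{\perp}$ is invariant and contributes no eigenvalue above $1$.
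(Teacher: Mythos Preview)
Your proof is correct. Your main route, however, differs from the paper's. The paper proceeds by a direct variational computation: it fixes an orthonormal basis $\{\ket{v_0},\ldots,\ket{v_{k-1}}\}$ for $S$ with $\ket{v_0}$ parallel to $\Pi_S\ket{\phi}$, appends $\ket{v_k}\in S^\perp$ so that $\ket{\phi}=\sin\theta\ket{v_0}+\cos\theta\ket{v_k}$, expands a pure $\ket{\psi}=\sum_i\alpha_i\ket{v_i}$, and then optimizes the resulting expression in the $\alpha_i$'s by a trigonometric substitution $|\alpha_0|=\cos\theta_0$, $|\alpha_k|=\sin\theta_0$, arriving at $1+\sin\theta\sin(\theta+2\theta_0)\le 1+\sin\theta$; the mixed-state case follows by convexity. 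Your principal argument is more structural: you rewrite the objective as $\Tr(A\rho)$ with $A=\Pi_S+\kera{\phi}$, reduce the question to bounding $\|A\|_\infty$, observe that $A$ is block diagonal on $W\oplus W^\perp$ with $W=\mathrm{span}\{\ket{\phi_S},\ket{\phi_{S^\perp}}\}$, and then read off the top eigenvalue $1+\sin\theta$ from the trace and determinant of the $2\times 2$ block. This is shorter and more conceptual, and it makes transparent that the extremizer is the top eigenvector of $A$ (which indeed bisects $\ket{\phi}$ and $\ket{\phi_S}$, matching the paper's $\theta_0=\tfrac12(\pi/2-\theta)$). Your parenthetical alternative---bounding $|\langle\phi|\psi\rangle|\le\cos(\beta-\theta)$ via the decomposition into $S$ and $S^\perp$ components and then simplifying $\sin^2\beta+\cos^2(\beta-\theta)=1+\sin\theta\sin(2\beta-\theta)$---is essentially the paper's computation in compressed form.
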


\begin{proof} We first prove this lemma for any pure state
  $\rho = \kera{\psi}$.  Let $\text{dim}(S) = k$.  Let
  $\ket{v_0} := \frac{\Pi_S \ket{\phi}}{\|\Pi_S\ket{\phi}\|}$ and
  $\ket{v_k}: = \frac{\ket{\phi} - \ket{v_0}}{\|\ket{\phi} -
    \ket{v_0}\|}$.  Clearly, $\ket{v_k} \perp \ket{v_0}$, and
  $\ket{\phi} = \sin\theta \ket{v_0} + \cos\theta \ket{v_k}$.  Then we
  pick $\{\ket{v_1}, \ldots, \ket{v_{k-1}}\}$ in $S$ such that
  $\{\ket{v_0}, \ldots, \ket{v_{k-1}}\}$ form an orthonormal basis for
  $S$.  As a result, $\{\ket{v_0}, \ldots, \ket{v_k}\}$ will be an
  orthonormal basis for $\tilde S : = \text{span}(S\cup
  \ket{\phi})$.  Consider any $\rho = \kera{\psi}$ with
  $\ket{\psi} \in \tilde S$.  Then $\ket{\psi}$ can be written as 
  \begin{equation}\label{eq:on_the_plate}
    \ket{\psi} = \sum_{i=0}^k \alpha_i \ket{v_i}, \quad \sum_i
    |\alpha_i|^2 = 1 \, .
  \end{equation}
  We have that
  \begin{eqnarray*}
    \bra{\phi} (\kera{\psi}) \ket{\phi} &=& |\alpha_0\sin\theta  +
                                            \alpha_k\cos\theta|^2 =
                                            |\alpha_0|^2 \sin^2\theta
                                            + |\alpha_k|^2\cos^2\theta\\
                                            &&+ \sin \theta\cos\theta(\alpha_0
                                            \alpha^*_k + \alpha_0^* \alpha_k) \,; \\
    \Tr(\Pi_S \kera{\psi}) &=& \sum_{i=0}^{k-1} |\alpha_i|^2 = 1 -
                               |\alpha_k|^2\, .  
  \end{eqnarray*}
  Therefore
  \begin{eqnarray}
    && \Tr(\Pi_S \kera{\psi})  +     \bra{\phi} (\kera{\psi})
       \ket{\phi} \label{eq:bound_1}\\
    &=& 1 + \sin^2\theta |\alpha_0|^2 + (\cos^2\theta - 1)
        |\alpha_k|^2 + \sin\theta\cos\theta (\alpha_0\alpha_k^* + \alpha_0^*\alpha_k)\nonumber\\
    &=& 1 + \sin\theta\cdot\left(\sin\theta(|\alpha_0|^2-|\alpha_k|^2) +
        \cos\theta(\alpha_0\alpha_k^* + \alpha_0^*\alpha_k) \right) \nonumber\\
    &\leq & 1 + \sin\theta\cdot(\sin\theta (|\alpha_0|^2-|\alpha_k|^2) + 2\cos\theta(|\alpha_0||\alpha_k|)).\label{eq:bound_2}
  \end{eqnarray}
  Since the expression in Equation~\ref{eq:bound_2} is strictly
  increasing with $|\alpha_0|$ and independent to $|\alpha_1|,\dots |\alpha_{k-1}|$, we can suppose  the optimal $|\psi\rangle$
  for Equation~\ref{eq:bound_1} is on the subspace spanned by
  $|v_0\rangle$ and $|v_k\rangle$ without loss of generality.  Thus we let $|\alpha_0| = \cos\theta_0$ and $|\alpha_k| = \sin\theta_0$ and the upper bound for Equation~\ref{eq:bound_2} as below
  \begin{eqnarray*}
  &&1 + \sin\theta\cdot(\sin\theta (|\alpha_0|^2-|\alpha_k|^2) + 2\cos\theta(|\alpha_0||\alpha_k|)) \\
  &=& 
  1+\sin\theta(\sin\theta(\cos^2\theta_0 -\sin^2\theta_0)+ 2\cos\theta\cos\theta_0\sin\theta_0)\\
  &=& 1+\sin\theta(\sin\theta \cos2\theta_0 + \cos\theta\sin2\theta_0)\\
  &=& 1+\sin\theta(\sin (\theta + 2\theta_0))\leq 1+\sin\theta.  
  \end{eqnarray*}
  The maximum is achieved when
  $\theta_0= \frac 1 2( \pi/2 - \theta)$, i.e., when
  $\ket{\psi}$ bisects $\ket{\phi}$ and $\ket{v_0}$.  
   
  For an arbitrary mixed state
  $\rho := \sum_{i} p_i \kera{\psi_i}$ with $\sum_i p_i =1$,
  $p_i \geq 0$.
  \begin{equation*}
    \Tr(\Pi_S \rho) + \bra{\phi}\rho \ket{\phi} = \sum_{i} p_i
    \left(\Tr(\Pi_S \kera{\psi_i}) + \bra{\phi} (\kera{\psi_i}\right) \ket{\phi})
    \leq 1 + \sin \theta \, .    
  \end{equation*}

\end{proof}

\begin{proof}[Proof of Theorem~\ref{thm:one_main}]

The intuition behind the soundness proof is that the two branches (conditioning on register $\Pi$) of
  verifier's verification are competing and the prover cannot cheat one without also changing the other.  When the input $x\notin L$, 
  a cheating prover must apply an operation far from $O_{f^{-1}}$ on $\ket{Q}$
  to make $R$ accept.  We will show that when it applies such an operation, it must move
  the trap state $\ket{T}$ far from the correct state $\ket{T^H}$ which
  will be detected by the verifier.  
Now, we can finish the proof by showing the completeness and soundness.  

We introduce some notation first.  Let the state of the entire
  system after the prover's action be $\frac{1}{\sqrt{2}}(\ket{\psi_0}_{PMV}\ket{0}_B +
    \ket{\psi_1}_{PMV}\ket{1}_B) \, .$

  If the prover is honest, then $\ket{\psi_0} = \ket{0}_P\ket{Q^H}, \quad   \ket{\psi_1} =
    \ket{0}_P \ket{T^H}$.

   If the prover is dishonest, we can always assume that $O_{f^{-1}}$ is applied honestly, followed by an arbitrary unitary $\malp$ on its work register $P$ and message register $M$.  In this case
  \begin{equation*}
    \ket{\psi_0} = \malp\otimes I_V (\ket{0}_P\ket{Q^H}_{MV}), \quad \ket{\psi_1} =
    \malp\otimes I_V (\ket{0}_P\ket{T^H}_{MV}) \, .
  \end{equation*}
  For ease of notation, define $\rho_0 := \Tr_P(\kera{\psi_0}_{PMV})$ and $\rho_1 : =
    \Tr_P(\kera{\psi_1}_{PMV})$. 
    
    Let $\projr$ be the projection to the acceptance subspace
  $\accs \subseteq \hilbert_M \otimes \hilbert_V$ induced by
  $R$.  Observe that the verifier accepts with probability
  \begin{eqnarray*}
    \succp &:=& \frac 1 2 (p_0 + p_1) \, , \quad\text{where }  p_0= \Tr(\projr \rho_0)\, , \quad p_1 = \bra{T^H} \rho_1 \ket{T^H} \, .
  \end{eqnarray*}

  \mypar{Completeness} If $x\in \bar L$, then $\rho_0 = \kera{Q^H}$
  and $\rho_1 = \kera{T^H}$.  Therefore,
  $p_0 = \Tr(\projr \rho_0) \geq 1 - \veps$ by our hypothesis on the
  reduction.  Meanwhile $p_1 = \bra{T^H} \rho_1 \ket{T^H} = 1$.
  Therefore $\succp = \frac 1 2 (p_0 + p_1) \geq 1 - \veps/2$.

  \mypar{Soundness} Suppose that $x \notin \bar L$.  By
  Lemma~\ref{claim:u_indistinguishable}, we have that $p_1 = \bra{T^H} \rho_1 \ket{T^H}  = \bra{Q^H} \rho_0 \ket{Q^H}$.  
  Therefore $ \succp = \frac{1}{2} (p_0 + p_1) = \frac 1 2 (\Tr(\projr \rho_0) +
    \bra{Q^H}\rho_0\ket{Q^H}) \, .$
  Since $x\notin \bar L$, we know that
  $\reduction\cnot \ket{Q^H} = \sqrt{\delta}\ket{0}\ket{\phi_{x,0}} +
  \sqrt{1- \delta}\ket{1}\ket{\phi_{x,1}}$ with $\delta \leq
  \veps$.  Therefore, $\bra{Q^H} \Pi_R \ket{Q^H} \leq \veps$, i.e.,
  $\ket{Q^H}$ is almost orthogonal to the acceptance subspace
  $\accs$.  Then from the prover's perspective, to maximize the verifier's accepting probability, it needs to find a state whose projection
  on $\ket{Q^H}$ and $\accs$ combined is maximized.  By
  Lemma~\ref{lemma:maxproj}, the maximum is achieved by 
  a state bisecting $\ket{Q^H}$ and its projection on $\accs$, 
  and we conclude
  that $\succp = \frac{1}{2}(\Tr(\projr \rho_0) +
    \bra{Q^H} \rho_0 \ket{Q^H}) \leq \frac 1 2 (1 + \sqrt{\veps}).$
  
\end{proof}

Let $L$ be a hard problem in \class{NP} or \class{QMA}, we derive the following corollaries from Theorem~\ref{thm:one_main}.

\begin{corollary}\label{cor:1}
  If there exists a uniform one-query locally quantum reduction
  from a worst-case \class{NP}-hard decision problem to
  inverting a one-way permutation, then  \class{coNP} $\subseteq$
  \class{QIP$(2)$}.
\end{corollary}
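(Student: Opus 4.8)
The corollary is Theorem~\ref{thm:one_main} together with two standard observations: amplifying the error of the reduction down to exponentially small, and the robustness of \class{QIP}($2$). Definition~\ref{def:qff} only guarantees a reduction with error $1-p\le 1/3$, while Theorem~\ref{thm:one_main} needs exponentially small error, so the first step is amplification. Given the uniform one-query locally quantum reduction $(G,R)$ to $\iowp$ for the permutation $f$ on $\bool^n$, I would pass to the product permutation $f^{\otimes t}$ on $\bool^{nt}$, $f^{\otimes t}(x_1,\dots,x_t)=(f(x_1),\dots,f(x_t))$, which is polynomial-time computable (and again one-way when $f$ is, by a padding argument, although Theorem~\ref{thm:one_main} only needs a permutation), with $t=\poly(n)$ to be fixed. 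The amplified generator $G'$ outputs the tensor product of $t$ copies of $G$'s query state on the same input $x$; since each copy is uniform over $\mathbb{Z}_2^m$, the product is uniform over $\mathbb{Z}_2^{mt}$, so $(G',R')$ is still a uniform one-query locally quantum reduction, now to the oracle inverting $f^{\otimes t}$. The amplified reduction circuit $R'$ applies $R$ to each of the $t$ blocks, writing its output bit into a fresh qubit, and then coherently writes the majority of those $t$ bits into the final output qubit. Because the $t$ blocks are copies of the same reduction on the same $x$, each carries squared amplitude $\ge 2/3$ on the correct bit $\mathcal{P}(x)$, and a Chernoff estimate on the squared amplitudes shows that $R'$'s output qubit has weight at most $e^{-\Omega(t)}$ on the wrong value of $\mathcal{P}(x)$. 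Choosing $t$ a suitable polynomial makes the error $\veps$ of $(G',R')$ negligible in $n$.

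Next, apply Theorem~\ref{thm:one_main} to $(G',R')$: it yields a \class{QIP}($2$) protocol for $\overline L$ with completeness $1-\veps/2\ge 1-\negl(n)$ and soundness $1/2+2\sqrt{\veps}\le 1/2+\negl(n)$. This has a constant gap (roughly $1/2$) between completeness and soundness, so by the standard amplification for two-message quantum interactive proofs~\cite{JUW09} its parameters can be pushed to $(2/3,1/3)$ without increasing the number of messages; hence $\overline L\in\class{QIP}(2)$.

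Finally, since $L$ is \class{NP}-hard, $\overline L$ is \class{coNP}-hard: for any $M\in\class{coNP}$ we have $\overline M\in\class{NP}$, which many-one reduces to $L$ via some polynomial-time $g$, and the same $g$ is a many-one reduction from $M$ to $\overline L$. Moreover \class{QIP}($2$) is closed under polynomial-time many-one reductions — a verifier for $M\le_m \overline L$ computes the reduction and then runs the protocol for $\overline L$, preserving the number of messages and the acceptance probabilities — so every $M\in\class{coNP}$ lies in \class{QIP}($2$), i.e., $\class{coNP}\subseteq\class{QIP}(2)$.

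The only non-mechanical point is the amplification step, and even there the sole thing to verify is that coherently taking a majority of $t$ copies of the reduction on a common input still satisfies the syntactic requirements of a uniform one-query locally quantum reduction to $\iowp$ — one query to an oracle inverting a polynomial-time computable permutation, with uniform query amplitudes — so that Theorem~\ref{thm:one_main} applies as a black box; the rest is routine facts about \class{QIP}($2$).
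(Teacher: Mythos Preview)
Your proof is correct and takes essentially the paper's route—apply Theorem~\ref{thm:one_main} and then use \class{NP}-hardness of $L$ to conclude $\class{coNP}\subseteq\class{QIP}(2)$—but you are more careful about one point the paper's two-line argument glosses over: Theorem~\ref{thm:one_main} requires exponentially small error, whereas Definition~\ref{def:qff} only promises $p\ge 2/3$. Your fix, bundling $t$ parallel copies into a single uniform query to the product permutation $f^{\otimes t}$ and taking a coherent majority, is equivalent (after the register rearrangement the paper itself performs in Equations~\eqref{eq:rearrange_Q}--\eqref{eq:reqrrange_T}) to the paper's own amplification in Lemma~\ref{lem:UQWC_amplification}, which instead increases the number of queries and invokes the multi-query protocol; your packaging has the minor virtue of staying in the one-query setting so that Theorem~\ref{thm:one_main} applies verbatim.
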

\begin{proof}
  Suppose $L$ is \class{NP}-hard, and it reduces to Inv-OWP via a uniform one-query quantum locally random
  reduction.  By Theorem~\ref{thm:one_main},
  $\overline{L}\in \class{QIP(2)}$, hence
  $coNP \subseteq \class{QIP(2)}$.
\end{proof}

\begin{corollary}\label{cor:2}
If there exists a uniform one-query locally quantum
  reduction from a worst-case promise problem which is \class{QMA}-hard to
inverting a one-way permutation, then
\class{coQMA} $\subseteq$ \class{QIP$(2)$} 
\end{corollary}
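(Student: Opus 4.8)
The plan is to mirror the proof of Corollary~\ref{cor:1}. The only two things that need to be checked beyond that argument are (i) that Theorem~\ref{thm:one_main} goes through when $L$ is a \emph{promise} problem rather than a total language, and (ii) that \class{QIP$(2)$} is closed under the polynomial-time many-one reductions witnessing \class{QMA}-hardness, so that a containment of $\overline{L}$ yields a containment of all of \class{coQMA}.

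For (i), I would re-read the proof of Theorem~\ref{thm:one_main} and observe that it never uses totality of $L$: the completeness case only invokes the reduction's guarantee on inputs $x\in\bar L$, and the soundness case (via Claim~\ref{claim:u_indistinguishable} and Lemma~\ref{lemma:maxproj}) only invokes it on inputs $x\notin\bar L$; for a promise problem $L=(L_{yes},L_{no})$ these are exactly the promise-satisfying inputs, where the reduction is by hypothesis guaranteed to accept with probability $\geq 1-\veps$ on $L_{yes}$ and $\leq\veps$ on $L_{no}$. Hence, if a \class{QMA}-hard promise problem $L$ reduces to $\iowp$ via a uniform one-query locally quantum reduction --- whose error can first be made exponentially small by standard amplification, using that answers of an $\iowp$ oracle are efficiently checkable --- then the promise problem $\overline{L}=(L_{no},L_{yes})$ has a \class{QIP$(2)$} protocol with completeness $1-\veps/2$ and soundness $1/2+2\sqrt{\veps}$. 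With $\veps$ exponentially small this protocol has a constant gap, so by amplification of \class{QIP$(2)$}~\cite{JUW09} we get $\overline{L}\in\class{QIP$(2)$}$.

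For (ii), take any promise problem $A=(A_{yes},A_{no})\in\class{QMA}$. By \class{QMA}-hardness of $L$ there is a polynomial-time computable $h$ with $h(A_{yes})\subseteq L_{yes}$ and $h(A_{no})\subseteq L_{no}$, so that $\overline{A}\red{m}\overline{L}$. Since on input $x$ a \class{QIP$(2)$} verifier for $\overline{L}$ can be run on $h(x)$ instead without affecting the number of rounds, \class{QIP$(2)$} is closed under $\red{m}$, giving $\overline{A}\in\class{QIP$(2)$}$. As $A$ ranges over \class{QMA}, $\overline{A}$ ranges over \class{coQMA}, and we conclude $\class{coQMA}\subseteq\class{QIP$(2)$}$. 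The only delicate point is (i) --- confirming that the soundness and completeness analyses of Theorem~\ref{thm:one_main} are genuinely insensitive to the presence of a promise; once that is granted, the rest is the routine transfer of a complement-class containment along many-one reductions.
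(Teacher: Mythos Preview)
Your proposal is correct and follows the same approach as the paper's proof, which simply applies Theorem~\ref{thm:one_main} to conclude $\overline{L}\in\class{QIP(2)}$ and then infers $\class{coQMA}\subseteq\class{QIP(2)}$ from the \class{QMA}-hardness of $L$. You are considerably more explicit than the paper about why the promise-problem setting poses no obstacle and why \class{QIP(2)} is closed under the relevant many-one reductions; the paper's own proof is two lines and leaves both points implicit.
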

\begin{proof}
  Suppose $L$ is \class{QMA}-hard and there exists a uniform one-query quantum locally random
  reduction from $L$ to Inv-OWP.  By Theorem~\ref{thm:one_main}, $\overline{L}\in \class{QIP(2)}$.  This implies $coQMA \subseteq \class{QIP(2)}$.  
\end{proof}

\subsection{Uniform non-adaptive locally quantum
  reductions}~\label{sec:ff_reduction}

In this section, we are going to generalize Theorem~\ref{thm:one_main}
such that the existence of a \emph{multi-query} uniform non-adaptive
locally quantum reduction with constant error implies
coNP$\subseteq$QIP($2$).

Let $f$ be a one-way permutation, and let $U_f$ be a circuit computing
it.  A uniform non-adaptive locally quantum reduction $(G,R)$ from a decision problem to the task of inverting $f$ is 
 defined as:  
\begin{eqnarray}
  \ket{x, 0} &\stackrel{\generator}{\longrightarrow}&
  \frac{1}{\sqrt{2^{mk}}} \sum_{q_1,\dots,q_k\in \mathbb{Z}_2^m} \ket{q_1,0,w_{x,1}(q_1)}\otimes\cdots\otimes\ket{q_k,0,w_{x,k}(q_k)}\nonumber\\
  &\stackrel{O_{f^{-1}}}{\longrightarrow}& \frac{1}{\sqrt{2^{mk}}} \sum_{q_1,\dots,q_k\in \mathbb{Z}_2^m} \ket{q_1,f^{-1}(q_1),w_{x,1}(q_1)}\otimes\cdots\otimes\ket{q_k,f^{-1}(q_k),w_{x,k}(q_k)} \nonumber\\
  &\stackrel{\reduction}{\longrightarrow}&
  a_0|0\rangle|\psi_{x,0}\rangle+a_1|1\rangle|\psi_{x,1}\rangle,\nonumber
\end{eqnarray}
where $|a_1|^2\geq2/3$ if $x\in L$ and $|a_1|^2\leq 1/3$ if $x\notin L$ is the probability the reduction accepts.  

\begin{theorem}~\label{thm:m_main}
Suppose there exists a uniform non-adaptive locally quantum reduction $(G,R)$ from a worst-case decision problem $L$ 
to $\iowp$.  Then, there exists a $\class{QIP}(2)$ protocol 
with completeness $1-\epsilon/2$ and soundness $1/2+2\sqrt{\epsilon}$ for 
$\overline{L}$, where $\epsilon$ is negligible.
\end{theorem}

Before giving the main theorem, we first show that the error of locally quantum reductions and quantum worst-case to
  average-case reductions can be reduced by parallel repetition.  

\begin{lemma}[Error reduction]~\label{lem:UQWC_amplification} The
  error of locally quantum reductions and quantum worst-case to
  average-case reductions can be reduced to an exponential small
  parameter $\epsilon$ in polynomial time and polynomial number of
  queries.
\end{lemma}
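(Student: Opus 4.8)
The plan is to use standard parallel repetition together with a quantum majority-vote step implemented coherently, adapting the classical amplification for worst-case to average-case reductions to the unitary setting. Fix a reduction $(G,R)$ with error $1-p \le 1/3$ (the constant $1/3$ coming from Definition~\ref{def:qlrr}/Definition~\ref{def:qwar}, or any constant bounded away from $1/2$). I would run $t = \poly(n)$ independent copies of the reduction in parallel: a new generator $G^{\otimes t}$ applies $G$ to $t$ fresh copies of $\ket{0}_{MV}\ket{x}$, producing $kt$ superposition queries in total, still with the query amplitudes prescribed by $\mathcal{D}$ (so the locally-quantum structure is preserved) and still non-adaptive. After the oracle (or the $\delta$-close oracle $\mathcal{P}''$) is applied to all $kt$ query registers, each block $j \in [t]$ holds the state $R$ expects, so applying $R^{\otimes t}$ coherently yields, in each block, a bit register $B_j$ together with garbage, where $B_j$ carries $\mathcal{P}(x)$ with the amplitude-squared weight $p_j \ge 2/3$ guaranteed by the reduction. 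I would then let the new $R'$ be: apply $R^{\otimes t}$, compute the majority of $B_1,\dots,B_t$ coherently into a fresh output qubit, and (optionally) uncompute $R^{\otimes t}$ to keep the workspace clean; output that qubit.

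The key step is the analysis of the amplitude on the correct majority outcome. Write the post-oracle state as $\bigotimes_{j=1}^t \bigl(\sqrt{p_j}\,\ket{\mathcal{P}(x)}_{B_j}\ket{\psi_j^0} + \sqrt{1-p_j}\,\ket{1-\mathcal{P}(x)}_{B_j}\ket{\psi_j^1}\bigr)$. Expanding the tensor product indexes basis states by subsets $S \subseteq [t]$ of ``wrong'' blocks, with squared amplitude $\prod_{j \in S}(1-p_j)\prod_{j \notin S} p_j$; the coherent majority gate outputs $\mathcal{P}(x)$ exactly on those branches with $|S| < t/2$. Since the branches for distinct $S$ live in orthogonal subspaces of the $B_1\cdots B_t$ registers, there is no interference, and the total squared amplitude on the correct output is $\sum_{|S| < t/2} \prod_{j\in S}(1-p_j)\prod_{j\notin S}p_j$, which is exactly the probability that a sum of independent (non-identical) Bernoulli$(1-p_j)$ variables, each with mean $\le 1/3$, is less than $t/2$. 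A Chernoff/Hoeffding bound makes this $\ge 1 - 2^{-\Omega(t)}$, so choosing $t = \Theta(n)$ gives error $\epsilon = \negl(n)$. The same computation works for the quantum worst-case to average-case case, since the only property used is the per-block amplitude bound $p_j \ge 2/3$, which holds for every $\delta$-close $\mathcal{P}''$.

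The main obstacle I anticipate is not the probability estimate but verifying that the amplified reduction still \emph{syntactically} fits Definition~\ref{def:qlrr} (resp.\ Definition~\ref{def:qwar}): one must check that $G^{\otimes t}$ produces queries whose amplitudes are literally $\sqrt{d_q}$ (it does, per block), that the total number of queries $kt$ and ancilla size remain polynomial, and that $R'$ is a genuine polynomial-time unitary — hence the coherent majority must be reversible, which is why I include the uncomputation of $R^{\otimes t}$ (and of the majority circuit's scratch) so that $R'$ cleanly maps to $\sqrt{p'}\ket{\mathcal{P}(x)}\ket{\psi_x^0} + \sqrt{1-p'}\ket{1-\mathcal{P}(x)}\ket{\psi_x^1}$ with $p' \ge 1-\epsilon$. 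A secondary subtlety: the garbage states $\ket{\psi_j^0},\ket{\psi_j^1}$ within a block are not assumed orthogonal, but this is harmless — orthogonality of the $B_j$ bit-strings across the subset decomposition is all that prevents cross-term interference, so the bound goes through regardless.
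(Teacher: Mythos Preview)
Your proposal is correct and follows essentially the same approach as the paper: parallel repetition of $(G,R)$ into $t$ independent blocks followed by a majority vote, with the error bounded by a standard tail estimate. Your treatment is in fact more careful than the paper's brief argument, explicitly addressing the coherent implementation of the majority, the absence of interference via orthogonality of the $B_j$ registers, and the syntactic compatibility with Definitions~\ref{def:qlrr} and~\ref{def:qwar}, all of which the paper leaves implicit.
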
 

\begin{proof}
  The error of both reductions can be reduced by parallel repetition.  The new reduction ($R',G'$) is described as follows: 
\begin{enumerate}
\item $G'$ operates $G$ $t$ times to generate $t$ copies of
  $|Q_{x,1}\rangle\otimes\cdots\otimes|Q_{x,k}\rangle$ and send all
  copies to the oracle in parallel, where $t$ is polynomial in the input length $n$.  
\item After getting all $t$ responses $|Q_{x,1}^H,\dots,Q_{x,k}^H\rangle$ from the oracle, $R'$ operates $R$ $t$ times and make the majority vote.  If more than $t/2$ copies are accepted, $R'$ accepts; otherwise, $R'$ rejects.   
\end{enumerate}

For completeness, the probability that $(G',R')$ rejects is
$\sum_{u<\frac{t}{2}} {{t}\choose{u}}
\frac{2}{3}^{u}(1-\frac{2}{3})^{t-u}$.  For soundness, the probability
that $(G',R')$ accepts is
$\sum_{u>\frac{t}{2}} {{t}\choose{u}}
\frac{1}{3}^{u}(1-\frac{1}{3})^{t-u}$.  Both are negligible. This completes the proof.
\end{proof}

Then we will show that such reduction does not exist unless
 \class{coNP} $\subseteq$ \class{QIP($2$)}.

\begin{proof}[Proof of Theorem~\ref{thm:m_main}]
  The error can be reduced to an exponentially small parameter $\epsilon$ by applying Lemma~\ref{lem:UQWC_amplification}.  The idea of the protocol for multiple queries is the same as the protocol in Protocol~\ref{fig:protocol1query} for one query.  The verifier generates a superposition of the query state and the trap state and sends part of the state to the prover.  In the following, we will give a \class{QIP}($2$) protocol which is similar to the protocol in
  Protocol~\ref{fig:protocol1query} for $\overline{L}$.    
  
  By Lemma~\ref{lem:UQWC_amplification}, the error of a quantum locally random $(G,R)$ can be reduced to an exponentially small parameter $\epsilon$ by parallel repetition, where we suppose $G$ is operated $t$ times and each time it generates $k$ queries.  We denote the new reduction as $(G',R')$.
  
  We now introduce the query state and the trap state the verifier generates. By applying $G'$ and $C$, the verifier generates $|Q_{1,1}\rangle|Q_{1,2}\rangle\otimes\cdots\otimes |Q_{t,k}\rangle$, where
  \[
    |Q_{i,j}\rangle =  \frac{1}{\sqrt{2^{m}}}\sum_{q\in \mathbb{Z}_2^{m}} |q,0\rangle|w_{x,j}(q),q\rangle \mbox{ for }1\leq i\leq t, 1\leq j\leq k.  
  \]
  Note that $i$ indicates the $i$-th copy generated from the parallel repetition in Lemma~\ref{lem:UQWC_amplification}.  Also, the verifier generates $|T\rangle^{\otimes tk}$, 
    where $|T\rangle$ is defined in Equation~\ref{eq:T}.  
    
    Then, we rearrange the qubits such that the first two registers of all $|Q_{i,j}\rangle$ and $|T\rangle$ are moved to the beginning in sequence as follows: 
    \begin{align}
    &|Q_{1,1},\dots, Q_{t,k}\rangle \rightarrow |\qx\rangle_{MV}
    = \frac{1}{2^{mkt/2}}\sum_{\hat{q}\in \mathbb{Z}_2^{mkt}} |\hat{q}, 0\rangle_M|w_{x}(\hat{q}),\hat{q}\rangle_V \label{eq:rearrange_Q}\\
    &|T\rangle^{\otimes k} \rightarrow |\hat{T}\rangle_{MV}
    = \frac{1}{2^{mkt/2}}\sum_{\hat{q}\in \mathbb{Z}_2^{m k t}} |\hat{q}, 0\rangle_M|0,\hat{q}\rangle_V.  \label{eq:reqrrange_T}
    \end{align}
    where $\hat{q}=[q_{1,1},\dots, q_{1,k},\dots,q_{t,1},\dots,q_{t,k}]$ and $w_x(\hat{q}) = [w_{x,1}(q_{1,1}),\dots,w_{x,k}(q_{t,k})].$
    For example, given a state of two queries $\sum_{q,q'} |q,0\rangle|w_{x,1}(q),q\rangle|q',0\rangle|w_{x,2}(q),q'\rangle$, following the rearrangement, we represent it as $\sum_{q,q'} |qq',0\rangle|w_{x,1}(q)w_{x,2}(q'),qq'\rangle$.  

    Similarly, we define 
    \begin{align*}
        &|\qx^H\rangle_{MV} = \frac{1}{2^{mkt/2}}\sum_{\hat{q}\in \mathbb{Z}_2^{mkt}} |\hat{q},f^{-1}(\hat{q})\rangle_M|w_{x}(\hat{q}),\hat{q}\rangle_V \\
      &|\hat{T}^H\rangle_{MV} = \frac{1}{2^{mkt/2}}\sum_{\hat{q}\in \mathbb{Z}_2^{mkt}} |\hat{q},f^{-1}(\hat{q})\rangle_M| 0,\hat{q}\rangle_V, \nonumber
    \end{align*}
    where $f^{-1}(\hat{q})= (f^{-1}(q_{1,1}),\dots,f^{-1}(q_{t,k})).$ 
    
\begin{algorithm}[h]
  \begin{mdframed}[style=figstyle,innerleftmargin=10pt,innerrightmargin=10pt]
  The protocol takes place in the space
      $\mathcal{H}_P \otimes \mathcal{H}_{M} \otimes \mathcal{H}_V
      \otimes \mathcal{H}_{\Pi}$ 
      where $P$ is the private register of the prover, $M$ is the
      register exchanged between the prover and the verifier, and
      $V$ and $\Pi$ are registers which are private to the verifier.
    \begin{enumerate}
    \item \emph{The verifier's query.}  The verifier
      prepares the state below.  The message register $M$ is sent to
      the prover.
    \begin{eqnarray*}
       \ket{\hat{S}}_{MV\Pi} :=
        \frac{1}{\sqrt{2}}(|\hat{Q}\rangle_{MV}|0\rangle_{\Pi}+|\hat{T}\rangle_{MV}|1\rangle_{\Pi}).\nonumber
    \end{eqnarray*}
    
  \item \emph{The prover's response.}  The prover applies some unitary $U_{PM}$ on register $M$ and its private register $P$ and sends the message register back to the verifier.  
  \item \emph{The verifier's verification.} The verifier applies
    $C$ to erase $\hat{q}$ in $V$.  The verifier then measures $\Pi$ to obtain
    $b \in \{0,1\}$, and does the following:
    \begin{itemize}
    \item \text{(Computation verification)} If $b=0$, apply $R'$ on $MV$
      and measure the output qubit.  Accept if the outcome is $0$.
    \item \text{(Trap verification)} If $b=1$, apply $V_T$ on each $|T\rangle$ in $MV$
      and measure $MV$.  Accept if the outcome is the all $0$ string.
    \end{itemize}
  \end{enumerate}
  \caption{\class{QIP}($2$) protocol for $\overline{L}$ using a non-adaptive locally quantum reduction.} 
  \label{fig:protocolmquery}
  \end{mdframed}
\end{algorithm}

    The \class{QIP($2$)} protocol for $\overline{L}$ is shown in Protocol~\ref{fig:protocolmquery}.  Note that the prover's behavior $U_{PM}$ can be represented as $U'_{PM}O_{f^{-1}}$ where $U'_{PM}$ is an arbitrary unitary a cheating prover may apply.  In the following, we show that the protocol in Protocol~\ref{fig:protocolmquery} is a \class{QIP($2$)} protocol for $\overline{L}$ .
 
For the completeness condition, when $x\in \overline{L}$, the verifier accepts with probability $\geq 1-\epsilon/2$ via the same calculation in Section~\ref{sec:one_reduction}.  

For the soundness condition, assume $x\notin \overline{L}$.  Let 
\[
\hat{\sigma}^{U'}_Q=\Tr_P(U'_{PM}\otimes I_V(|0\rangle\langle 0|\otimes |\hat{Q}^H\rangle\langle \hat{Q}^H|)U'^{\dagger}_{PM}\otimes I_V)
\]
\[
    \hat{\sigma}^{U'}_T=\Tr_P(U'_{PM}\otimes I_V(| 0\rangle\langle  0|\otimes |\hat{T}^H\rangle\langle \hat{T}^H|)U'^{\dagger}_{PM}\otimes I_V).  
\]
Since $|\hat{T}^H\rangle$ and $|\hat{Q}^H\rangle$ are two purifications of the mixed state $\Tr_{V}(\ket{\hat{Q}^H}\bra{\hat{Q}^H})$ on register $M$, 
\begin{eqnarray}
\bra{\hat{Q}^H}\hat{\sigma}^{U'}_{Q}\ket{\hat{Q}^H}= \bra{\hat{T}^H}\hat{\sigma}^{U'}_{T}\ket{\hat{T}^H} \label{eq:m_equal}
\end{eqnarray}
by Lemma~\ref{lemma:compm}.  Then, we do a similar calculation as in the proof of soundness for Theorem~\ref{thm:one_main}, which gives an upper bound $1/2+\sqrt{\epsilon}$ on the probability that the verifier accepts.
\end{proof}

The following two corollaries follow from Theorem~\ref{thm:m_main}, which proofs are the same as the proof for Corollary~\ref{cor:1} and Corollary~\ref{cor:2}.

\begin{corollary}
If there exists a uniform non-adaptive quantum locally random
  reduction from a worst-case decision problem which is \class{NP}-hard to
the task of inverting a one-way permutation, then  \class{coNP}
$\subseteq$ \class{QIP$(2)$}.
\end{corollary}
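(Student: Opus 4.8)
The plan is to treat Theorem~\ref{thm:m_main} as a black box and then run the standard hardness-transfer argument, exactly as in the proofs of Corollary~\ref{cor:1} and Corollary~\ref{cor:2}. Suppose $L$ is \class{NP}-hard and there is a uniform non-adaptive locally quantum reduction $(G,R)$ from $L$ to $\iowp$. First I would drive the reduction's error down to a negligible $\epsilon$ using Lemma~\ref{lem:UQWC_amplification}, and then invoke Theorem~\ref{thm:m_main} to obtain a two-message quantum interactive proof for $\overline{L}$ with completeness $1-\epsilon/2$ and soundness $1/2+2\sqrt{\epsilon}$.

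Next I would argue that this protocol witnesses $\overline{L}\in\class{QIP($2$)}$. The completeness-soundness gap is $1/2-\epsilon/2-2\sqrt{\epsilon}$, which is bounded below by a fixed positive constant for all sufficiently large inputs (the finitely many small inputs can be decided by table lookup), so the protocol can be amplified to meet the $2/3$ versus $1/3$ thresholds in the definition of \class{QIP($2$)}: running copies in parallel keeps the message count at two, and in any case \class{QIP($2$)} admits amplification of completeness to $1-\negl$ and soundness to $\negl$~\cite{JUW09}. Hence $\overline{L}\in\class{QIP($2$)}$. Finally I would use closure of \class{QIP($2$)} under polynomial-time many-one reductions: since $L$ is \class{NP}-hard, $\overline{L}$ is \class{coNP}-hard, so every $A\in\class{coNP}$ reduces to $\overline{L}$; the verifier computes the image instance and runs the protocol above, yielding \class{coNP}$\subseteq\class{QIP($2$)}$.

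The corollary itself is routine; the real content lives in Theorem~\ref{thm:m_main}, whose soundness analysis must first flatten the $tk$ parallel query and trap blocks into the single pair $(\ket{\qx^H},\ket{\hat T^H})$ via the rearrangement of Equations~\ref{eq:rearrange_Q}--\ref{eq:reqrrange_T}, then observe that $\ket{\qx^H}$ and $\ket{\hat T^H}$ are purifications of the same reduced state on $M$ and apply Lemma~\ref{lemma:compm} to equate $\bra{\qx^H}\hat\sigma^{U'}_Q\ket{\qx^H}$ with $\bra{\hat T^H}\hat\sigma^{U'}_T\ket{\hat T^H}$, and finally rerun the optimization of Lemma~\ref{lemma:maxproj} on these large registers to bound a cheating prover's acceptance probability by $1/2+\sqrt{\epsilon}$. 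The only point one should double-check at the level of the corollary is that the trap-verification map $V_T$ remains efficient after the rearrangement, which holds because it acts blockwise on each $\ket{T}$ and $f$ is polynomial-time computable; I expect this — rather than anything in the hardness transfer — to be the sole place where a little care is needed.
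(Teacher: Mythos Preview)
Your proposal is correct and matches the paper's own argument: the paper explicitly states that this corollary follows from Theorem~\ref{thm:m_main} by the same proof as Corollary~\ref{cor:1}, namely the one-line hardness transfer ``$L$ is \class{NP}-hard, Theorem~\ref{thm:m_main} gives $\overline{L}\in\class{QIP(2)}$, hence \class{coNP}$\subseteq$\class{QIP(2)}.'' Your additional remarks on amplification, closure under reductions, and the internal structure of Theorem~\ref{thm:m_main} are accurate elaborations but go beyond what the paper spells out (note also that the error-reduction step you mention is already absorbed into the statement of Theorem~\ref{thm:m_main}, so invoking Lemma~\ref{lem:UQWC_amplification} separately is redundant though harmless).
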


\begin{corollary}
If there exists a uniform non-adaptive quantum locally random
  reduction from a worst-case promise problem which is \class{QMA}-hard to
the task of inverting a one-way permutation, then
\class{coQMA} $\subseteq$ \class{QIP$(2)$}.
\end{corollary}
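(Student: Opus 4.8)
The plan is to derive this corollary from Theorem~\ref{thm:m_main} exactly as Corollary~\ref{cor:2} was derived from Theorem~\ref{thm:one_main}; all of the real content lives in Theorem~\ref{thm:m_main}, so what remains is a short reduction argument. First I would record the hypothesis: there is a uniform non-adaptive locally quantum reduction $(G,R)$ from the \class{QMA}-hard promise problem $L$ to $\iowp$. Feeding $(G,R)$ into Theorem~\ref{thm:m_main} yields a two-message quantum interactive proof for $\overline{L}$ with completeness $1-\epsilon/2$ and soundness $1/2+2\sqrt{\epsilon}$, where $\epsilon$ is negligible. Since $(1-\epsilon/2)-(1/2+2\sqrt{\epsilon})$ is bounded below by a positive constant, I would then invoke gap amplification for two-message quantum interactive proofs (parallel repetition, using that the error of \class{QIP($2$)} can be driven down to negligible~\cite{JUW09}) to push completeness above $2/3$ and soundness below $1/3$, so that the protocol witnesses $\overline{L}\in\class{QIP($2$)}$ in the sense of the definition.

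Next I would transfer this to all of \class{coQMA}. Because $L$ is \class{QMA}-hard, every promise problem $A=(A_{yes},A_{no})$ in \class{QMA} admits a polynomial-time many-one reduction $g$ with $g(A_{yes})\subseteq L_{yes}$ and $g(A_{no})\subseteq L_{no}$. The same $g$ then reduces $\overline{A}=(A_{no},A_{yes})$ to $\overline{L}=(L_{no},L_{yes})$, and as $A$ ranges over \class{QMA}, $\overline{A}$ ranges over all of \class{coQMA}; hence every problem in \class{coQMA} reduces to $\overline{L}$ under polynomial-time many-one reductions. Since \class{QIP($2$)} is closed under such reductions --- the verifier first computes $g$ on its input and then runs the \class{QIP($2$)} protocol for $\overline{L}$, which preserves the completeness and soundness bounds --- we conclude $\class{coQMA}\subseteq\class{QIP($2$)}$.

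I do not expect a genuine technical obstacle here: the argument is bookkeeping layered on top of Theorem~\ref{thm:m_main}. The two points that deserve a sentence of care are (i) verifying that both the \class{QMA}-hardness reduction and the closure property of \class{QIP($2$)} respect the promise structure, i.e., send yes-instances to yes-instances and no-instances to no-instances while keeping the promise, and (ii) confirming that the constant completeness--soundness gap furnished by Theorem~\ref{thm:m_main} really suffices to place $\overline{L}$ in \class{QIP($2$)} after amplification rather than only in a class with weaker parameters. Both are standard, so the corollary follows immediately.
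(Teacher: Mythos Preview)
Your proposal is correct and follows essentially the same route as the paper: apply Theorem~\ref{thm:m_main} to place $\overline{L}$ in \class{QIP($2$)}, then use \class{QMA}-hardness of $L$ together with closure of \class{QIP($2$)} under polynomial-time reductions to conclude $\class{coQMA}\subseteq\class{QIP($2$)}$. The paper's own proof is simply a pointer to the identical argument for Corollary~\ref{cor:2}; you have merely made explicit two points the paper leaves implicit, namely the gap-amplification step bringing the $(1-\epsilon/2,\ 1/2+2\sqrt{\epsilon})$ parameters to the standard $(2/3,1/3)$ thresholds and the promise-respecting closure of \class{QIP($2$)} under many-one reductions.
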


\section{Generalizations}
\label{sec:gen}

In this section, we will show that we can generalize our techniques to other settings.  We will first see that we can deal with distributions which are not uniform but close to uniform. Then, we will show that in addition to locally quantum reduction, we can also handle the case of quantum worst-case to average-case reduction when the queries are ``smooth''. Finally, we will consider the task of inverting a regular one-way function and will show that when the oracle can do a quantum sampling of the preimages, then the same theorem for Inv-OWP also holds for Inv-OWF.

\subsection{Smooth locally quantum reductions to $\iowp$}
\label{sec:smooth}
We first show that when the distribution is not far from uniform distribution, we can still construct a \class{QIP}($2$) protocol given the non-adaptive quantum locally reduction. Therefore, the existence of such reduction also imply $ \class{coNP}\subseteq \class{QIP(2)}$.

The difficulty to apply Protocol~\ref{fig:protocol1query} and Protocol~\ref{fig:protocolmquery} to non-uniform distributions is that we do not know how to construct a trap state that can be mapped to $|0\rangle$ efficiently and has the state in the message register be indistinguishable from the actual query state.  Here we show that if the distribution is smooth-computable, then the verifier can use the same trap state by applying quantum rejection sampling~\cite{OMR13} to prevent the prover from cheating.  

\begin{definition}[Smooth-computable distributions]\label{def:smooth_dist}
A distribution $\mathcal{D} = \{\mathcal{D}_n: n\in \mathbb{N}\}$ is said to be smooth-computable if it satisfies the following properties.  Let $d_q = \Pr[q\sim \mathcal{D}_{n}]$, $d_{min, n}= \min_{q\in \{0,1\}^n} d_q$ and $d_{max, n}= max_{q\in \{0,1\}^n} d_q$.  
\begin{enumerate}
    \item For $n\in \mathbb{N}$, for all $q$ where $|q| = n$, the function $f_n: f_n(q) = d_q$ is polynomial-time computable.  
    \item For $n\in \mathbb{N}$, $2^n d_{min, n} \geq \frac{1}{poly(n)}$ and $2^n d_{max, n} \leq poly(n)$.
\end{enumerate}
\end{definition}

Loosely speaking, smooth-computable distributions are point-wise close
to the uniform distribution.  It is worth noting that
Protocol~\ref{fig:protocolmquery} can handle those that have
negligible statistical distance to the uniform distribution.  However,
there exists some smooth-computable distribution that has
inverse-polynomial distance from the uniform distribution.  In such
cases, it is unclear if soundness still holds in
Protocol~\ref{fig:protocolmquery}.

Again, we start with the special case of \emph{one-query}
reductions with negligible error.  Generalizing to multiple non-adaptive queries is similar
to the case of uniform distributions.

Let $f$ be a one-way permutation on $\bool^n$, and let $U_f$ be a quantum circuit computing it.  
A smooth one-query locally
quantum reduction according to a smooth-computable distribution
$\mathcal{D}=\{\mathcal{D}_n: n\in \mathbb{N}\}$ proceeds as follows:

\begin{eqnarray}
  \ket{x, 0}\ket{0} &\stackrel{\generator}{\longrightarrow}&
                                                      \frac{1}{\sqrt{2^{m}}} \sum_{q\in\mathbb{Z}_2^m} \sqrt{d_q}\ket{q,0}\ket{w_{x}(q)}\label{eq:slqr_G}\\
             &\overset{O_{f^{-1}}}{\longrightarrow}& \frac{1}{\sqrt{2^{m}}} \sum_{q\in \mathbb{Z}_2^m} \sqrt{d_q}\ket{q,f^{-1}(q)}\ket{w_{x}(q)}\\ 
             &\stackrel{\reduction}{\longrightarrow}&a_0|0\rangle|\psi_{x,0}\rangle+a_1|1\rangle|\psi_{x,1}\rangle,\label{eq:slqr_R}
\end{eqnarray}
where $|a_1|^2\geq 1-\epsilon$ if $x\in L$ and $|a_1|^2\leq \epsilon$ if $x\notin L$ is the probability the reduction accepts and $d_q$ is the probability $q$ is drawn from $\mathcal{D}_{n}$ for $n=|q|$.

\begin{theorem}~\label{thm:one_smooth_main} Suppose there exists a one-query
  smooth locally quantum reduction with exponentially small error
  $\epsilon$ from a worst-case decision problem $L$ to the task of
  inverting a polynomial-time computable permutation.  Then there
  exists a $\class{QIP}(2)$ protocol with completeness $1-\epsilon/2$ and
  soundness $1/2+2\sqrt{\epsilon}$ for $\overline{L}$
\end{theorem}

The proof of Theorem~\ref{thm:one_smooth_main} relies on the quantum rejection sampling technique
in~\cite{OMR13}. We give the definition of the quantum rejection sampling problem and we adapt their tools in the Lemma~\ref{lem:QRSamp}. 
\begin{definition} [Quantum rejection sampling problem $QRSP(\mathcal{D}, \mathcal{D}', n)$]\label{def:QRSampling}
Given an oracle $O_{\mathcal{D}}: \ket{0} \rightarrow \sum_{x=1}^{2^n} \sqrt{d_x}\ket{\xi_x}\ket{x}$ as a unitary, where $d_x\sim \mathcal{D}_{n}$ and $\ket{\xi_x}$ are some unknown fixed states.  The Quantum rejection sampling problem is to prepare the state $\sum_{x=1}^{2^n} \sqrt{d'_x}\ket{\xi_x}\ket{x}$ for $d'_x\sim \mathcal{D}'_{n}$.  
\end{definition}

\begin{lemma}~\label{lem:QRSamp} Let
  $\mathcal{D}= \{\mathcal{D}_n: n\in \mathbb{N}\}$ be a
  smooth-computable distribution and $\mathcal{U}$ be the uniform
  distribution.  There exists a quantum polynomial-time algorithm
  $QRSampling(\mathcal{D}\rightarrow \mathcal{U})$ that takes
  $\gamma = (\lceil\frac{1}{2^n d_{min,n}}\rceil)^2$ copies of
  $\sum_{x=1}^{2^n} \sqrt{d_x}\ket{\xi_x}\ket{x}$ and outputs a state
  that has negligible trace distance $\delta$ to
  $\sum_{x=1}^{2^n} \sqrt{\frac{1}{2^n}}\ket{\xi_x}\ket{x}$.  Similarly
  $QRSampling(\mathcal{U}\rightarrow \mathcal{D})$ takes
  $\gamma' = (\lceil2^n d_{max,n}\rceil)^2$ copies of
  $\sum_{x=1}^{2^n} \sqrt{\frac{1}{2^n}}\ket{\xi_x}\ket{x}$ and
  outputs a state that has negligible trace distance $\delta'$ to
  $\sum_{x=1}^{2^n} \sqrt{d_x}\ket{\xi_x}\ket{x}$.
\end{lemma}

Note that $\gamma$ and $\gamma'$ are polynomial in $n$ when $\mathcal{D}$ is smooth according to Definition~\ref{def:smooth_dist}.

\begin{proof}

We first show the sample complexity.  It has been shown in~\cite{OMR13} that Algorithm~\ref{alg:QRSampling} can solve the $QRSP(\mathcal{D}, \mathcal{D}', k)$ exactly with $1-e^{-\beta}$ with $\beta^2$ samples generated from $O_{\mathcal{D}}$ for $\frac{1}{\beta} = min_{x} d_x/d_x'$.  In case $\mathcal{D}=\mathcal{U}$, we have $\frac{1}{\beta} = min_x \frac{1}{2^nd_x}=\frac{1}{2^nd_{max,n}}$.  In case $\mathcal{D}'=\mathcal{U}$, we have $\frac{1}{\beta} = 2^nd_{min,n}$.

Algorithm~\ref{alg:QRSampling} can also be done in polynomial time.  Consider the case where $\mathcal{D}'=\mathcal{U}$.  The Step 2 in Algorithm~\ref{alg:QRSampling} can be viewed as a control rotation on the first and the third register.  
\[
S=\sum_{i=1}^{2^n}\frac{1}{d_i}
\begin{bmatrix}
    \sqrt{d_i-\frac{1}{ 2^n\gamma}}      & -\sqrt{\frac{1}{ 2^n\gamma}} \\
    \sqrt{\frac{1}{ 2^n\gamma}}      & \sqrt{d_i-\frac{1}{2^n\gamma}}
\end{bmatrix}
\otimes I\otimes \ket{i}\bra{i}
\]
By Solovay-Kitaev theorem, any known one-qubit unitary $V$ can be approximated by $V'$ which is implemented by polynomial number of gates from a finite universal gate set with an exponentially small error $\delta = max_{\ket{\psi}} \|(V - V')\ket{\psi}\|$.  Since $\mathcal{D}$ and $\mathcal{U}$ are polynomial-time computable as in Definition~\ref{def:smooth_dist}, we can approximate $S$ in polynomial time.  This completes the proof.  The analysis for the case where $\mathcal{D}=\mathcal{U}$ is the same.  

\floatname{algorithm}{Algorithm}
\begin{algorithm}
\caption{$QRSampling(\mathcal{D}\rightarrow \mathcal{D}')$}\label{alg:QRSampling}
\begin{algorithmic}[1]
\State Let $\frac{1}{\beta} = \min_x \frac{d_x}{d'_x}$.
\State Apply $O_{\mathcal{D}}$ to generate $\sum_{x=1}^{2^n} \sqrt{d_i}\ket{\xi_i}\ket{i}$.  
\State Pick $\vec{\alpha}\in \mathbb{R}^{2^n}_+$ where $\alpha_i = \frac{d'_i}{\beta}$and rotate the state in the first register by $S$
        \[
            S: \ket{0}(\sum_{x=1}^{2^n} \sqrt{d_i}\ket{\xi_i}\ket{i}) \rightarrow \sum_{x=1}^{2^n} (\sqrt{d_i-\alpha_i}\ket{0}+\sqrt{\alpha_i}\ket{1}) \ket{\xi_i}\ket{i}.
        \]
\State Measure the first qubit, which gives $\sum_{i=1}^{2^n} \sqrt{d'_i}\ket{\xi_i}\ket{i}$ with probability $\frac{1}{\beta}$.
\State By repeating steps 2 to 4 $\Theta(\beta^2)$ times, one can prepare the state $\sum_{i=1}^{2^n} \sqrt{d'_i}\ket{\xi_i}\ket{i}$ with probability $1-e^{-\beta}$ .  
\end{algorithmic}
\end{algorithm}

\end{proof}

In the following, we first give a new protocol Protocol~\ref{fig:protocolsquery} in Section~\ref{sec:protocol_qrs} and then prove that the protocol is a \class{QIP}($2$) protocol.

\subsubsection{The new protocol for $\overline{L}$ using quantum
  rejection sampling.}~\label{sec:protocol_qrs} We first describe some
states which are used in the protocol.
    \begin{eqnarray}
    |Q_{\mathcal{D}}\rangle_{MV} = \frac{1}{\sqrt{2^n}}\sum_{q\in \mathbb{Z}_2^m} \sqrt{d_q}|q,0\rangle_M|w_x(q),q\rangle_V, \label{eq:SQ}
    \end{eqnarray}
    where $|Q_{\mathcal{D}}\rangle_{MV}$ without the copy of $q$ in $V$ is the query state generated from $G$ as in Equation~\ref{eq:slqr_G}.  
    \begin{eqnarray}
    |Q^H_{\mathcal{D}}\rangle_{MV} = \frac{1}{\sqrt{2^n}}\sum_{q\in \mathbb{Z}_2^m} \sqrt{d_q}|q,f^{-1}(q)\rangle_M|w_x(q),q\rangle_V, \label{eq:SQ_2}
    \end{eqnarray}    
    where $|Q^H_{\mathcal{D}}\rangle$ without the extra copy $q$ in register $V$
    is the state the actual reduction $R$ gets after querying the
    oracle as in Equation~\ref{eq:slqr_R}.  
    
    By applying $QRSampling(\mathcal{D}\rightarrow \mathcal{U})$,  one can prepare the state $\ket{\tilde{Q}}$ from $|Q_{\mathcal{D}}\rangle_{MV}$ such that $D(\ket{\tilde{Q}}, \ket{Q})\leq \delta$, where $\delta$ is an exponentially small error.  We also define $\ket{\tilde{Q}^H} := O_{f^{-1}}\ket{\tilde{Q}}$.  Similarly, one can prepare the state $\ket{\tilde{Q}^H_{\mathcal{D}}}$ from $\ket{\tilde{Q}^H}$ by $QRSampling(\mathcal{U}\rightarrow \mathcal{D})$ such that $D(\ket{\tilde{Q}^H_{\mathcal{D}}}, \ket{Q^H_{\mathcal{D}}})\leq \delta'$, where $\delta'$ is exponentially small.

We give Protocol~\ref{fig:protocolsquery} for $\overline{L}$ with
non-adaptive smooth locally quantum reductions and proves Theorem~\ref{thm:one_smooth_main}. 
\ifcccsub
\begin{proof}[Proof of Theorem~\ref{thm:one_smooth_main}]

  Let the state of the entire system after the prover's action be
  \begin{equation*}
    \frac{1}{\sqrt{2}}(\ket{\tilde{\psi}}_{PM_1\dots M_{\gamma'}V_1\dots V_{\gamma'}}\ket{0}_B +
    \ket{\phi}_{PM_1\dots M_{\gamma'}V_1\dots V_{\gamma'}}\ket{1}_B) \, .
  \end{equation*}
To simplify the notation, we let $M=M_1M_2\dots M_{\gamma'}$ and $V=V_1V_2\dots V_{\gamma'}$.  
  If the prover is honest, then
  \begin{equation*}
    \ket{\tilde{\psi}} = \ket{0}_P\ket{\tilde{Q}^H}\otimes\cdots\otimes \ket{\tilde{Q}^H}, \quad   \ket{\phi} =
    \ket{0}_P \ket{T^H}\otimes\cdots\otimes \ket{T^H}\, ,
  \end{equation*}
where $F(\ket{\tilde{Q}^H},\ket{Q^H})\geq 1-\delta$ according to Lemma~\ref{lem:QRSamp}.  
  If the prover is dishonest, we can always assume that the prover first
  applies $O_{f^{-1}}$ honestly and then applies an arbitrary unitary
  $\malp$ on its work register $P$ and message register $M$.  In this
  case
  \begin{equation*}
    \ket{\tilde{\psi}} = \malp_{PM}\otimes I_V (\ket{0}_P\ket{\tilde{Q}^H}\otimes\cdots\otimes \ket{\tilde{Q}^H}), \quad \ket{\phi} =
    \malp_{PM}\otimes I_V (\ket{T^H}\otimes\cdots\otimes \ket{T^H}) \, .
  \end{equation*}
  For ease of notation, we define
  \begin{equation*}
    \tilde{\rho}_0 := \Tr_P(\kera{\tilde{\psi}}); \quad \rho_1 : =
    \Tr_P(\kera{\phi}) \, .
  \end{equation*}

  Let $\projr$ be the projection to the acceptance subspace
  $\accs \subseteq \hilbert_{M} \otimes \hilbert_{V}$ induced by
  the verifier's verification.  Observe that the verifier accepts with probability
  \begin{eqnarray*}
    \succp := \frac 1 2 (p_0 + p_1) \, , \quad\text{where }  p_0= \Tr(\projr \tilde{\rho}_0)\, , \quad p_1 = \bra{T^H}^{\otimes \gamma'} \rho_1 \ket{T^H}^{\otimes \gamma'} \, .
  \end{eqnarray*}

  \mypar{Completeness} If $x\in \bar L$, then $\tilde{\rho}_0 = \kera{\tilde{Q}^H}$
  and $\rho_1 = \kera{T^H}$.  Therefore,
  $p_0 = \Tr(\projr \tilde{\rho}_0) \geq 1 - \veps - 2\delta$ where $\epsilon$ is from our hypothesis on the
  reduction and $\delta$ is the error from the quantum rejection sampling.  Meanwhile $p_1 = \bra{T^H} \rho_1 \ket{T^H} = 1$.  Therefore $\succp = \frac 1 2 (p_0 + p_1) \geq 1 - (\veps+2\delta)/2$.

  \mypar{Soundness} Suppose that $x \notin \bar L$.  Let $\ket{\psi} = \malp_{PM}\otimes I_V (\ket{0}_P\ket{Q^H}\otimes\cdots\otimes \ket{Q^H})$ and $\rho_0 := \Tr_P(\kera{\psi})$.
  By
  Lemma~\ref{claim:u_indistinguishable}, we have that
  \begin{equation*}
    \bra{T^H}^{\otimes \gamma'} \rho_1 \ket{T^H}^{\otimes \gamma'}  = \bra{Q^H}^{\otimes \gamma'} \rho_0 \ket{Q^H}^{\otimes \gamma'} \,  , 
  \end{equation*}
  and then we are going to show that $\bra{T^H}^{\otimes \gamma'} \rho_1 \ket{T^H}^{\otimes \gamma'}$ is close to $\bra{Q^H}^{\otimes \gamma'} \tilde{\rho}_0 \ket{Q^H}^{\otimes \gamma'}$ except for a exponentially small error.
  
  First, by monotonicity of the fidelity, $F(\rho_0, \tilde{\rho}_0) \geq F(\ket{Q^H},\ket{\tilde{Q}^H})\geq 1-\delta$.  Then we define the angles between states $\ket{Q^H}^{\otimes \gamma'}$, $\rho_0$ and $\tilde{\rho}_0$ as \[
    A(\ket{Q^H}^{\otimes \gamma'}, \rho_0) = \arccos{F(\ket{Q^H}^{\otimes \gamma'}, \rho_0)}, \quad A(\tilde{\rho}_0, \rho_0) = \arccos{F(\tilde{\rho}_0, \rho_0)}, \text{ and}
  \]
  \[
    A(\ket{Q^H}^{\otimes \gamma'}, \tilde{\rho}_0) = \arccos{F(\ket{Q^H}^{\otimes \gamma'}, \tilde{\rho}_0)}.  
  \]
  By the triangular inequality, 
  \[
    A(\ket{Q^H}^{\otimes \gamma'}, \tilde{\rho}_0) \leq A(\ket{Q^H}^{\otimes \gamma'}, \rho_0) + A(\tilde{\rho}_0, \rho_0).  
  \]
  This gives 
  \begin{eqnarray*}
    F(\ket{Q^H}^{\otimes \gamma'}, \tilde{\rho}_0) &\geq& \cos{(A(\ket{Q^H}^{\otimes \gamma'}, \rho_0) + A(\tilde{\rho}_0, \rho_0))}\\
    &=& F(\ket{Q^H}^{\otimes \gamma'}, \rho_0)F(\tilde{\rho}_0, \rho_0) - \sqrt{1-F(\ket{Q^H}^{\otimes \gamma'}, \rho_0)}\sqrt{1-F(\tilde{\rho}_0, \rho_0)}\\
    &\geq& F(\ket{Q^H}^{\otimes \gamma'}, \rho_0) - 2\sqrt{\delta}.
  \end{eqnarray*}
  We can also get an upper bound on $F(\ket{Q^H}^{\otimes \gamma'}, \tilde{\rho}_0)$ as follows: 
  By triangular inequality, 
  \[
    A(\ket{Q^H}^{\otimes \gamma'}, \tilde{\rho}_0) \geq A(\ket{Q^H}^{\otimes \gamma'}, \rho_0) - A(\tilde{\rho}_0, \rho_0),  
\]
which implies 
\begin{eqnarray*}
    F(\ket{Q^H}^{\otimes \gamma'}, \tilde{\rho}_0) &\leq& \cos{(A(\ket{Q^H}^{\otimes \gamma'}, \rho_0) - A(\tilde{\rho}_0, \rho_0))}\\
    &=& F(\ket{Q^H}^{\otimes \gamma'}, \rho_0)F(\tilde{\rho}_0, \rho_0) + \sqrt{1-F(\ket{Q^H}^{\otimes \gamma'}, \rho_0)}\sqrt{1-F(\tilde{\rho}_0, \rho_0)}\\
    &\leq& F(\ket{Q^H}^{\otimes \gamma'}, \rho_0) +\sqrt{\delta}.  
  \end{eqnarray*}
  
We can conclude that  
  \[
    \bra{Q^H}^{\otimes \gamma'} \tilde{\rho}_0 \ket{Q^H}^{\otimes \gamma'} = \bra{Q^H}^{\otimes \gamma'} \rho_1 \ket{Q^H}^{\otimes \gamma'}+c\sqrt{\delta}=\bra{T^H}^{\otimes \gamma'} \rho_1 \ket{T^H}^{\otimes \gamma'}+c\sqrt{\delta}
  \]
  for $c$ a small constant.  
  Therefore
  \begin{equation*}
    \succp = \frac{1}{2} (p_0 + p_1) = \frac 1 2 (\Tr(\projr \tilde{\rho}_0) +
    \bra{Q^H}^{\otimes \gamma'}\tilde{\rho}_0\ket{Q^H}^{\otimes \gamma'}+c\sqrt{\delta}) \, .
  \end{equation*}
 By Lemma~\ref{lemma:maxproj}, we can give an upper bound on $\succp$ as follows.  
  \begin{equation*}
    \succp = \frac{1}{2}(\Tr(\projr \rho_0) +
    \bra{Q^H} \rho_0 \ket{Q^H}) \leq \frac 1 2 (1 + \sqrt{\veps}+c\sqrt{\delta}) \, .
  \end{equation*}
  
\end{proof}
\fi
\floatname{algorithm}{Protocol}
\begin{algorithm}[ht]
\begin{mdframed}[style=figstyle,innerleftmargin=10pt,innerrightmargin=10pt]
  Let $\gamma = (\lceil\frac{1}{2^n d_{min}}\rceil)^2$ and $\gamma' = (\lceil2^n d_{max}\rceil)^2$, where $d_{min}= \min_{q\in \{0,1\}^n} \Pr[q\sim \mathcal{D}_{n}]$ and $d_{max}= max_{q\in \{0,1\}^n} \Pr[q\sim \mathcal{D}_{n}]$.
    \begin{enumerate}
    \item \emph{The verifier's query.}  The verifier
      prepares the state 
      \begin{eqnarray*}
       \ket{S}_{MV\Pi} :=
        \frac{1}{\sqrt{2}}(\ket{\tilde{Q}}_{M_1V_1}\otimes\cdots\otimes\ket{\tilde{Q}}_{M_{\gamma'}V_{\gamma'}}|0\rangle_{\Pi}+\ket{T}_{M_1V_1}\otimes\cdots\otimes\ket{T}_{M_{\gamma'}V_{\gamma'}}|1\rangle_{\Pi}).\nonumber
    \end{eqnarray*}
      The message registers $M_1,\dots,M_{\gamma'}$ are sent to
      the prover, and the verifier keeps $V_1,\dots,V_{\gamma'}$ and $\Pi$.  $|\tilde{Q}\rangle$ can be prepared from $\gamma$ copies of $|Q_{\mathcal{D}}\rangle$ by applying $QRSampling(\mathcal{D}\rightarrow \mathcal{U})$.  
  \item \emph{The prover's response.}  The prover applies some unitary $U_{PM_1\dots M_{\gamma'}}$ on registers $M_1\dots M_{\gamma'}$ and its private register $P$ and sends the message registers back to the verifier.  
  \item \emph{The verifier's verification.} The verifier applies
    $C$ to erase $q$ in $V_1\dots V_{\gamma'}$.  The verifier then measures $\Pi$ to obtain
    $b \in \{0,1\}$, and does the following:
    \begin{itemize}
    \item \text{(Computation verification)} If $b=0$, apply $QRSampling(\mathcal{U}\rightarrow\mathcal{D})$ to get a state $\ket{\tilde{Q}^H_{\mathcal{D}}}$,
    apply $R$ on $\ket{\tilde{Q}^H_{\mathcal{D}}}$ and measure the output qubit.  Accept if the outcome is $0$.
    \item \text{(Trap verification)} If $b=1$, apply $V_T$ on $M_iV_i$ for $i\in[\gamma']$
      and measure.  Accept if the outcome is all $0$.
    \end{itemize}
  \end{enumerate}
    \caption{\class{QIP}($2$) protocol for $\overline{L}$ with non-adaptive smooth locally quantum reductions.}
  \label{fig:protocolsquery}
  \end{mdframed}
  \end{algorithm}

By the same proof as in Section~\ref{sec:ff_reduction}, we generalize Theorem~\ref{thm:one_smooth_main} to Theorem~\ref{thm:m_smooth_main}.
\begin{theorem}~\label{thm:m_smooth_main} Suppose there exists a one-query
  smooth locally quantum reduction with constant error from a worst-case decision problem $L$ to $\iowp$.  Then there
  exists a $\class{QIP}(2)$ protocol with completeness $1-\epsilon/2$ and
  soundness $1/2+2\sqrt{\epsilon}$ for $\overline{L}$, where $\epsilon$ is negligible.  
\end{theorem}

\subsection{Non-adaptive quantum worst-case to average-case
  reductions}\label{sec:w2a_reduction}

The same idea above actually also works for the non-adaptive quantum
worst-case to average-case reduction defined in
Definition~\ref{def:qwar}.  We show that if the queries are generated
arbitrarily according to known smooth-computable distributions, i.e.,
the distributions of each query can be different but are
smooth-computable and known, then the existence of such reductions
also implies \class{coNP}$\subseteq$ \class{QIP}($2$).  We call this reduction
\emph{known smooth non-adaptive quantum worst-case to average-case
  reduction}.

\begin{theorem}~\label{thm:m_main_BT}
Suppose there exists a known smooth non-adaptive quantum worst-case to average-case reduction with average hardness $\delta$
$(G,R)$ from a worst-case decision problem $L$ to $\iowp$.  Then, there exists a \class{QIP($2$)} protocol 
with completeness $1-\epsilon/2$ and soundness $1/2+2\sqrt{\epsilon}$ for 
$\overline{L}$
\end{theorem}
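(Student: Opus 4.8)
The plan is to mimic Protocol~\ref{fig:protocolsquery} and the proof of Theorem~\ref{thm:one_smooth_main}, lifted to many non-adaptive queries exactly as in Section~\ref{sec:ff_reduction}, and to observe that the worst-case-to-average-case flavour of the hypothesis costs us nothing. First I would apply Lemma~\ref{lem:UQWC_amplification} to push the reduction error $1-p$ down to a negligible $\epsilon$; this turns the $k$ queries into $tk$ non-adaptive query positions for some polynomial $t$, with position $(a,i)$ carrying the known, $x$-independent smooth-computable marginal $\mathcal{D}_i$. The verifier runs $G$ polynomially many times so as to obtain, for each position $(a,i)$, enough copies of the $\mathcal{D}_i$-weighted query state that $QRSampling(\mathcal{D}_i\rightarrow\mathcal{U})$ (Lemma~\ref{lem:QRSamp}) produces $\gamma'_i$ copies of a state exponentially close to the ideal uniform query state $\ket{Q}$ of Equation~\ref{eq:Q}. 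Rearranging registers as in Equations~\ref{eq:rearrange_Q} and~\ref{eq:reqrrange_T}, the verifier sends $\frac{1}{\sqrt{2}}(\ket{\hat{Q}}_{MV}\ket{0}_\Pi+\ket{\hat{T}}_{MV}\ket{1}_\Pi)$, where $\ket{\hat{Q}}$ bundles all uniformized query copies and $\ket{\hat{T}}$ is a tensor product of the uniform trap states $\ket{T}$ of Equation~\ref{eq:T}; it then measures $\Pi$, on outcome $0$ runs $QRSampling(\mathcal{U}\rightarrow\mathcal{D}_i)$ on each block and applies $R$, and on outcome $1$ runs the trap check $\VT$ on every trap block. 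The honest prover applies $O_{f^{-1}}$ to every message register.

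For completeness, when $x\in\overline{L}$ the honest prover leaves every trap block in $\ket{T^H}$, which $\VT$ accepts with certainty, and leaves the query blocks in a state that $QRSampling(\mathcal{U}\rightarrow\mathcal{D}_i)$ maps exponentially close to the state $R$ expects; since $R$ is promised correct on \emph{every} $\delta$-close oracle and in particular on $\mathcal{P}''=f^{-1}$, it outputs the correct bit with probability $\geq 1-\epsilon$. Averaging the two branches gives acceptance probability $\geq 1-\epsilon/2$ once the polynomially many exponentially small rejection-sampling errors are folded into $\epsilon$. This is the only role the worst-case-to-average-case assumption plays: the verifier never needs the prover to simulate an erroneous oracle, so the $\delta$-robustness of $R$ is used purely to certify that $R$ behaves correctly on the exact inverse.

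For soundness, suppose $x\notin\overline{L}$ and write the prover's action as $\malp_{PM}O_{f^{-1}}$. Let $\rho_1$ be the reduced state of the trap branch (built from the \emph{exact} trap state, so $\rho_1$ is the prover channel applied to $\ket{\hat{T}^H}$) and let $\rho_0$, $\tilde{\rho}_0$ be the reduced states of the computation branch obtained respectively from the ideal uniform response $\ket{\hat{Q}^H}$ and from the rejection-sampled response $\ket{\tilde{Q}^H}$. Because $\ket{\hat{Q}^H}$ and $\ket{\hat{T}^H}$ are purifications of the same density operator on $M$, Claim~\ref{claim:u_indistinguishable} (i.e.\ Lemma~\ref{lemma:compm}) applied to the rearranged states gives $\bra{\hat{T}^H}\rho_1\ket{\hat{T}^H}=\bra{\hat{Q}^H}\rho_0\ket{\hat{Q}^H}$, and the fidelity-angle triangle inequality of the proof of Theorem~\ref{thm:one_smooth_main} replaces $\rho_0$ by $\tilde{\rho}_0$ up to an exponentially small additive error. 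Since $x\in L$, applying $QRSampling(\mathcal{U}\rightarrow\mathcal{D}_i)$ and then $R$ to $\ket{\hat{Q}^H}$ yields the wrong output bit, hence acceptance, with probability at most $\epsilon$ plus an exponentially small term, so $\ket{\hat{Q}^H}$ is almost orthogonal to the computation branch's acceptance subspace $\projr$. Lemma~\ref{lemma:maxproj} then bounds $\Tr(\projr\tilde{\rho}_0)+\bra{\hat{Q}^H}\tilde{\rho}_0\ket{\hat{Q}^H}$ by $1+\sqrt{\epsilon}+(\text{negligible})$, which equals $2\succp$ up to negligible error; hence $\succp\leq\frac{1}{2}+2\sqrt{\epsilon}$. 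The reduction of one query to $tk$ non-adaptive ones is verbatim Section~\ref{sec:ff_reduction}, and the \class{coNP} and \class{coQMA} corollaries follow as in Corollaries~\ref{cor:1} and~\ref{cor:2}.

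The step I expect to demand the most care is the error accounting: each of the polynomially many query positions calls $QRSampling$ twice, every call itself consumes polynomially many state copies and incurs an exponentially small trace-distance error, and these errors must be propagated through $R$, through the purification identity, and through the fidelity-angle estimates while remaining negligible in aggregate. Alongside this one must check that ``known smooth'' is precisely the hypothesis that makes the construction go through: the verifier needs the per-query magnitudes $|c_{x,q,i}|$ to be $x$-independent and efficiently computable so that it can choose the correct rejection-sampling rotation $S$, while any $x$-dependent phases are harmless because Lemma~\ref{lem:QRSamp} treats them as part of the opaque states $\ket{\xi_q}$.
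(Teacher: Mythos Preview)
Your proposal is correct and follows essentially the same route as the paper: amplify via Lemma~\ref{lem:UQWC_amplification}, uniformize each query block with $QRSampling$ as in Protocol~\ref{fig:protocolsquery}, and then rerun the purification identity plus Lemma~\ref{lemma:maxproj} analysis from Sections~\ref{sec:one_reduction}--\ref{sec:smooth}. Your observation that the $\delta$-robustness of $R$ is only ever invoked on the exact inverse $f^{-1}$ (a $0$-close oracle) is exactly right; the paper phrases the soundness case informally as a dichotomy on whether the prover's action is ``$\delta$-close to $O_{f^{-1}}$,'' but the actual bound comes from the same calculation you give, so there is no substantive difference.
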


\begin{proof}
Suppose $(G,R)$ is the reduction and $G$ generates $k$ uniform queries.  
Given any function $g$ which is $\delta$-close to $f^{-1}$ as Definition~\ref{def:close}.  Then, the smooth non-adaptive worst-case to average-case reduction is as follows: 
\begin{eqnarray}
|x, 0\rangle &\xrightarrow{G}& (\sum_{q}\sqrt{d_{1,q}}\ket{q,0,w_x(q)})\otimes\cdots\otimes(\sum_{q}\sqrt{d_{k,q}}\ket{q,0,w_x(q)})  \nonumber\\
&\xrightarrow{O_{g}}& (\sum_{q}\sqrt{d_{1,q}}\ket{q,f^{-1}(q),w_x(q)})\otimes\cdots\otimes(\sum_{q}\sqrt{d_{k,q}}\ket{q,f^{-1}(q),w_x(q)})  \nonumber\\
&\xrightarrow{R}& \sqrt{p}|L(x)\rangle|\psi_{x,0}\rangle + \sqrt{1-p}|1-L(x)\rangle|\psi_{x,1}\rangle, \nonumber
\end{eqnarray}
where $p\geq 2/3$ and $d_{i,q}$ are
the probability that $q$ is drawn from a smooth-computable distribution $\mathcal{D}^{(i)}_{|q|}$.  Note that $\mathcal{D}^{(i)}_{|q|}$ can be difference from $\mathcal{D}^{(j)}_{|q|}$ for $i\neq j$.  The error of the reduction can be reduced to an exponentially small parameter $\epsilon$ by Lemma~\ref{lem:UQWC_amplification}.  

Given such a reduction from $L$ to $\iowp$, Protocol~\ref{fig:protocolsquery} decides $\overline{L}$.  It is worth noting that since the distribution of each query is known and smooth-computable, one can apply quantum resampling for uniform distribution as in Protocol~\ref{fig:protocolsquery}.  For completeness, the honest prover always simulates $O_{f^{-1}}$, which is the same honest prover considered in Theorem~\ref{thm:one_smooth_main}.  Hence, the verifier accepts with probability at least $1-\frac{\epsilon}{2}$.  For soundness, if the prover's operation is $\delta$-close to $O_{f^{-1}}$, then the verifier accepts with probability $\leq (1+\epsilon)/2$.  Else if it chooses an operation $U'_{PM}$ which is not close to any $\delta$-close oracle for $O_{f^{-1}}$, then the modified trap state must be far from the original trap state.  By the calculation in Section~\ref{sec:smooth}, we get the same upper bound on the soundness.
\end{proof}

The following two corollaries follow from Theorem~\ref{thm:m_main_BT}. 
 
\begin{corollary}
If there exists a known smooth non-adaptive quantum worst-case to average-case reduction from a worst-case decision problem which is \class{NP}-hard to
$\iowp$, then  \class{coNP} $\subseteq$ \class{QIP$(2)$}
\end{corollary}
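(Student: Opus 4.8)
The plan is to obtain this statement as an immediate consequence of Theorem~\ref{thm:m_main_BT}, exactly parallel to the way Corollary~\ref{cor:1} follows from Theorem~\ref{thm:one_main}. Let $L$ be the \class{NP}-hard decision problem, and suppose it reduces to $\iowp$ via a known smooth non-adaptive quantum worst-case to average-case reduction. First I would invoke Theorem~\ref{thm:m_main_BT} to obtain a \class{QIP($2$)} protocol for $\overline{L}$ with completeness $1-\epsilon/2$ and soundness $1/2+2\sqrt{\epsilon}$; since $\epsilon$ is negligible, the separation between these two quantities is bounded below by a constant close to $1/2$, so it can be boosted to the canonical $2/3$ versus $1/3$ thresholds, indeed to negligibly small error, by the completeness/soundness amplification for \class{QIP($2$)} recalled in the preliminaries~\cite{JUW09}. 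This establishes $\overline{L}\in \class{QIP(2)}$.

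Second, I would promote this single membership fact to all of \class{coNP}. Because $L$ is \class{NP}-hard, every language $M\in \class{NP}$ admits a polynomial-time many-one reduction $M\leq_p L$, and hence $\overline{M}\leq_p \overline{L}$. The class \class{QIP($2$)} is closed under polynomial-time many-one reductions: the verifier simply computes the reduction as a deterministic preprocessing step on its input and then runs the protocol for $\overline{L}$, which changes neither the number of messages nor the completeness and soundness guarantees. Therefore $\overline{M}\in \class{QIP(2)}$ for every $M\in \class{NP}$, i.e., \class{coNP}$\subseteq$\class{QIP($2$)}.

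I do not expect a genuine technical obstacle here, since all the substantive work is already carried out in establishing Theorem~\ref{thm:m_main_BT}: the trap-state construction of Section~\ref{sec:one_reduction}, the error-reduction Lemma~\ref{lem:UQWC_amplification}, the quantum rejection sampling of Section~\ref{sec:qrs}, and the soundness analysis via Lemma~\ref{lemma:maxproj} and Claim~\ref{claim:u_indistinguishable}. The only points deserving a line of justification are that the concrete parameters of Protocol~\ref{fig:protocolsquery} form an amplifiable gap, which they do since $\epsilon$ is negligible, and that the \class{NP}-hardness reduction is classical and efficient, so prepending it to a quantum verifier is harmless. With those remarks the corollary is immediate.
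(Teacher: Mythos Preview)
Your proposal is correct and follows essentially the same approach as the paper: invoke Theorem~\ref{thm:m_main_BT} to place $\overline{L}$ in \class{QIP($2$)}, then use the \class{NP}-hardness of $L$ to conclude \class{coNP}$\subseteq$\class{QIP($2$)}. The paper treats this corollary as immediate (no separate proof is given; the reader is pointed to the argument for Corollary~\ref{cor:1}), whereas you spell out the amplification and the closure of \class{QIP($2$)} under poly-time reductions, but the underlying argument is identical.
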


\begin{corollary}
If there exists a known smooth non-adaptive quantum worst-case to average-case reduction from a worst-case promise problem which is \class{QMA}-hard to
$\iowp$, then \class{coQMA} $\subseteq$ \class{QIP$(2)$}
\end{corollary}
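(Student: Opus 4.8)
The plan is to derive this corollary as an immediate consequence of Theorem~\ref{thm:m_main_BT} together with the definition of \class{QMA}-hardness and the closure properties of \class{QIP($2$)}, in exact parallel with how Corollary~\ref{cor:2} is obtained from Theorem~\ref{thm:one_main}. In particular, I would not re-run the protocol analysis: the construction underlying Theorem~\ref{thm:m_main_BT} (Protocol~\ref{fig:protocolsquery}) never uses totality of $L$ — it only uses the \textsc{yes}/\textsc{no} behavior of the reduction $(G,R)$ — so it applies verbatim when $L$ is a promise problem.

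First I would let $L$ be the \class{QMA}-hard promise problem and invoke Theorem~\ref{thm:m_main_BT} on the assumed known-smooth non-adaptive quantum worst-case to average-case reduction from $L$ to $\iowp$. This yields a two-message quantum protocol for $\overline{L}$ with completeness $1-\epsilon/2$ and soundness $1/2+2\sqrt{\epsilon}$ for negligible $\epsilon$. Since the completeness--soundness gap is a constant (roughly $1/2$), I would then apply the amplification result for \class{QIP($2$)}~\cite{JUW09} to bring completeness above $2/3$ and soundness below $1/3$ (in fact to negligible error), so that $\overline{L}\in\class{QIP($2$)}$ in the sense of the definition in the preliminaries.

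Next I would unfold \class{QMA}-hardness: every promise problem $A\in\class{QMA}$ admits a polynomial-time many-one reduction $x\mapsto g(x)$ to $L$ with the promise preserved, hence $\overline{A}$ reduces to $\overline{L}$ via the same $g$. Because a \class{QIP($2$)} verifier is itself a polynomial-time quantum machine, it can first compute $g(x)$ and then run the amplified protocol for $\overline{L}$ on input $g(x)$; thus \class{QIP($2$)} is closed under such reductions and $\overline{A}\in\class{QIP($2$)}$ for every $A\in\class{QMA}$, i.e.\ $\class{coQMA}\subseteq\class{QIP($2$)}$.

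I do not expect a genuine obstacle here: all the work is in Theorem~\ref{thm:m_main_BT}. The only points needing a word of care are (i) that the raw soundness $1/2+2\sqrt{\epsilon}$ is not below $1/3$, so the appeal to~\cite{JUW09} for gap amplification is essential; and (ii) checking that the reduction witnessing \class{QMA}-hardness is of a type under which \class{QIP($2$)} is closed — classical polynomial-time reductions suffice, and even a \class{BQP} reduction would be fine since the verifier can run it as part of its own computation. Both are routine, and the proof is line-for-line the same as that of Corollary~\ref{cor:2} with Theorem~\ref{thm:m_main_BT} substituted for Theorem~\ref{thm:one_main}.
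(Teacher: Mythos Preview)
Your proposal is correct and follows the same approach as the paper, which simply states that the corollary follows from Theorem~\ref{thm:m_main_BT} in the same way that Corollary~\ref{cor:2} follows from Theorem~\ref{thm:one_main}. You have spelled out the details (promise-problem applicability, the need to invoke~\cite{JUW09} for gap amplification, closure of \class{QIP($2$)} under polynomial-time reductions) more carefully than the paper does, but the underlying argument is identical.
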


\subsection{Fixed-preimage-sized functions and
  quantum-sampling oracles}~\label{sec:fpsf}

One interesting question to address is whether the protocol we give in
Section~\ref{sec:one_reduction} can be used for more general
functions. We observe that it indeed extends to functions which have
fixed preimage size and surjective (onto), e.g., k-to-1 functions, if
the oracle in the reduction is capable of quantum sampling from all
$k$ preimages. To be more specific, this reduction follows
Definition~\ref{def:qlrr} except that the response of each query
changes to
\[
|Q^H\rangle = \frac{1}{\sqrt{2^n}}\sum_{q\in \mathbb{Z}_2^m}
        (\ket{q}\frac{1}{\sqrt{|f^{-1}(q)|}}\sum_{z\in f^{-1}(q)}\ket{z})|w_x(q),q\rangle, 
\]
where the oracle gives a uniform superposition of all solutions to
$q$. We call such reductions a \textsl{locally quantum reduction with
  a quantum-sampling oracle}. This kind of oracle has been considered in
cryptography. For instance, Bacon et al.~\cite{BCD06} showed that the dihedral hidden subgroup problem reduces to quantum sampling subset sum solutions. Our result naturally generalizes when considering
this kind of reductions.

\begin{corollary}\label{cor:0}
If there exists a uniform one-query \textbf{locally quantum reduction with a quantum-sampling oracle}
  from a worst-case \class{NP}-hard decision problem $L$ to
  inverting a one-way function which is fixed-preimage-sized and surjective, then $\overline{L}$ $\in$
  \class{QIP$(2)$}.
\end{corollary}
\begin{proof}
 This can be done by forcing the prover to give a superposition of all preimages. Suppose the function is k-to-1 and onto, we use the same protocol in Section~\ref{sec:one_reduction} except that the honest prover replies 
\begin{eqnarray}
        |Q^H\rangle_{MV} &=& \frac{1}{\sqrt{2^n}}\sum_{q\in \mathbb{Z}_2^m}
        (\ket{q}\frac{1}{\sqrt{|f^{-1}(q)|}}\sum_{z\in f^{-1}(q)}\ket{z})_M|w_x(q),q\rangle_V\label{eq:q_response_ex}\\
      |T^H\rangle_{MV} &=& \frac{1}{\sqrt{2^n}}\sum_{q\in \mathbb{Z}_2^m}
      (\ket{q}\frac{1}{\sqrt{|f^{-1}(q)|}}\sum_{z\in f^{-1}(q)}\ket{z})_M|0,q\rangle_V.\label{eq:t_response_ex}
\end{eqnarray}
It is not hard to see that $|T^H\rangle_{MV}$ can be mapped to all-zero state by the same unitary in the protocol of Section~\ref{sec:one_reduction}.  Furthermore, since $\ket{Q^H}$ and $\ket{T^H}$ have the same reduced density matrix in $M$, Lemma~\ref{claim:u_indistinguishable} and Lemma~\ref{lemma:maxproj} can be applied. 
\end{proof}

We can extend Corollary~\ref{cor:0} to locally quantum reductions with a quantum-sampling oracle, multiple non-adaptive queries, and smooth-computable distributions.



\section{Oracle separation between \class{coNP} and
  \class{QIP}}\label{sec:conp_qip2}

In this section, we show an oracle $A$ such that coNP$^A \nsubseteq$
QIP($2$)$^A$.
\begin{theorem}
 There exists an oracle $A$ and a language $L(A)\in \class{coNP}^A$ such that $L(A)\notin\class{QIP}(2)^A$.  
\end{theorem} 
\begin{proof}
  We first define the language $L(A)$ on any oracle $A$. For any oracle $A$, let 
  \[
    L(A) = \{1^n:\mbox{ A contains all strings of length n}\}.
  \]
  It is not hard to see that $L(A)\in  \class{coNP}$. Specifically, if $1^n$ is not in $L(A)$, then there exists an $n$-bit string $a$ which is not in $A$, and thus $a$ can be a certificate.    
  
 To show there exists an oracle $A$ such that $L(A)\notin\class{QIP}(2)^A$, we create the oracle $A$ in stages as in~\cite{FS88} via the diagonization technique. We enumerate all possible quantum verifiers in the manner such that $V_i$ is bounded in time by $n^i$, where $n$ is the input size. Then, every verifier $V_i$ will fail to recognize $1^{N_i}$ for some $N_i$ large enough. The main challenge in adapting to the quantum setting is to program the oracle without changing an algorithm's output by too much, even if it can query the oracle in quantum superposition.

 Consider $V_i$, we pick $N_i$ large enough such that $2^{N_i}>12(N_i)^{2i}$ and no oracle queries of length $N_i$ has been made by verifier $V_1,\dots,V_{i-1}$. Note that
 $V_i$ can only make queries with length at most $(N_i)^i$ and at most $(N_i)^i$ queries since the running time is bounded by $(N_i)^i$. Now, every time $V_i$ makes queries which have not been queried before, we let the oracle $A$ output $1$. If there is no prover can convince $V_i$ that $1^{N_i}$ is in $L(A)$ with probability at least $2/3$, then we let A contains all strings with length $N_i$. Otherwise, if there exists a prover which can convince $V_i$ that $1^{N_i} \in L(A)$ with probability at least $2/3$, then there must exist an $N_i$-bit string $x$ such that the sum of its query amplitude is at most $\frac{(N_i)^i}{2^{N_i/2}}$. Finally, we use the hybrid argument as in~\cite{BBBV}. Let $V_i$ be $U_{(N_i)^i+1} A U_{(N_i)^i} A\cdots A U_1$. Let $A'=A$ except that $A'(x) = 0$. Then, for any initial state $\ket{\psi}$
 \begin{align*}
     &\|U_{(N_i)^i+1} A U_{(N_i)^i} A\cdots A U_1\ket{\psi} - U_{(N_i)^i+1} A' U_{(N_i)^i} A'\cdots A' U_1\ket{\psi}  \| \leq \frac{2(N_i)^i}{2^{N_i/2}}.
 \end{align*}
 This implies that the probability that the probability that the same prover convinces $V_i$ with oracle $A'$ is $\frac{2}{3} - \frac{4(N_i)^{2i}}{2^{N_i}} \geq \frac{1}{3}.$
 This contradicts the hypothesis that there is no prover can convince $V_i$ to accept $1^{N_i}$ with probability at most $1/3$.

\end{proof}
\bibliographystyle{plainnat}


\bibliography{avg}

\appendix\label{appendix}

\end{document}